\pgfplotsset{compat=1.5}
\definecolor{beige}{RGB}{245, 245, 220}
\definecolor{darkgrey}{RGB}{75, 75, 75}
\definecolor{lightgrey}{RGB}{250, 250, 250}
\tikzstyle{dash} = [dashed, -latex,>=latex]
\tikzstyle{line} = [draw, -latex,>=latex]
\tikzstyle{smallbox} = [draw, minimum size=5.0mm]
\tikzstyle{box} = [draw, minimum size=7.0mm]
\tikzstyle{bigbox} = [draw, minimum size=10.0mm]
\tikzstyle{rectangle} = [draw, minimum width=10.0mm, minimum height=20.0mm]
\tikzstyle{switch} = [trapezium, trapezium angle=120, draw, rotate=90,  inner ysep=5pt, outer sep=5pt,
\tikzstyle{roundbox} = [draw, circle, inner sep=0pt, minimum size=3mm]
\tikzstyle{clamped} = [draw, fill=darkgrey, minimum size=0.15cm]
\tikzstyle{msgcircle} = [shape=circle, draw, inner sep=0pt, minimum size=4mm, fill=white, font=\scriptsize]
\tikzstyle{darkmsgcircle} = [shape=circle, draw, inner sep=0pt, minimum size=4mm, fill=darkgrey, text=white, font=\scriptsize]
\tikzstyle{redmsgcircle} = [shape=circle, draw=red, inner sep=0pt, minimum size=4mm, text=red, font=\scriptsize]
\tikzstyle{reddarkmsgcircle} = [shape=circle, draw=red, inner sep=0pt, minimum size=4mm, fill=red, text=white, font=\scriptsize]
\tikzstyle{msgdoublecircle} = [shape=circle, double, double distance=1.5pt, draw, inner sep=0pt, minimum size=5mm, fill=white]
\tikzstyle{darkmsgdoublecircle} = [shape=circle, double, double distance=1.5pt, draw, inner sep=0pt, minimum size=5mm, fill=darkgrey, text=white, font=\bfseries]
\newcommand{\msg}[6]{
      % Circle left arrow down
      \ifthenelse{\isin{#1}{left} \AND \isin{#2}{down}}{
            \coordinate (anchor) at ($({#3})!{#5}!({#4})$);
            \node[xshift=-6.0mm] at (anchor) {#6};
            \node[xshift=-1.0mm] at (anchor) {$\downarrow$};
      }{}
      % Circle right arrow down
      \ifthenelse{\isin{#1}{right} \AND \isin{#2}{down}}{
            \coordinate (anchor) at ($({#3})!{#5}!({#4})$);
            \node[xshift=6.0mm] at (anchor) {#6};
            \node[xshift=1.0mm] at (anchor) {$\downarrow$};
      }{}

      % Circle down arrow right
      \ifthenelse{\isin{#1}{down} \AND \isin{#2}{right}}{
            \coordinate (anchor) at ($({#3})!{#5}!({#4})$);
            \node[ yshift=-4.0mm] at (anchor) {#6};
            \node[yshift=-1.0mm] at (anchor) {$\rightarrow$};
      }{}
      % Circle up arrow right
      \ifthenelse{\isin{#1}{up} \AND \isin{#2}{right}}{
            \coordinate (anchor) at ($({#3})!{#5}!({#4})$);
            \node[ yshift=4.0mm] at (anchor) {#6};
            \node[yshift=1.0mm] at (anchor) {$\rightarrow$};
      }{}

      % Circle down arrow left
      \ifthenelse{\isin{#1}{down} \AND \isin{#2}{left}}{
            \coordinate (anchor) at ($({#3})!{#5}!({#4})$);
            \node[ yshift=-4.0mm] at (anchor) {#6};
            \node[yshift=-1.0mm] at (anchor) {$\leftarrow$};
      }{}
      % Circle up arrow left
      \ifthenelse{\isin{#1}{up} \AND \isin{#2}{left}}{
            \coordinate (anchor) at ($({#3})!{#5}!({#4})$);
            \node[ yshift=4.0mm] at (anchor) {#6};
            \node[yshift=1.0mm] at (anchor) {$\leftarrow$};
      }{}

      % Circle left arrow down
      \ifthenelse{\isin{#1}{left} \AND \isin{#2}{up}}{
            \coordinate (anchor) at ($({#3})!{#5}!({#4})$);
            \node[ xshift=-6.0mm] at (anchor) {#6};
            \node[xshift=-1.0mm] at (anchor) {$\uparrow$};
      }{}
      % Circle right arrow down
      \ifthenelse{\isin{#1}{right} \AND \isin{#2}{up}}{
            \coordinate (anchor) at ($({#3})!{#5}!({#4})$);
            \node[ xshift=6.0mm] at (anchor) {#6};
            \node[xshift=1.0mm] at (anchor) {$\uparrow$};
      }{}
}
\newcommand{\msgcircle}[6]{
      % Circle left arrow down
      \ifthenelse{\isin{#1}{left} \AND \isin{#2}{down}}{
            \coordinate (anchor) at ($({#3})!{#5}!({#4})$);
            \node[msgcircle,xshift=-5.0mm] at (anchor) {#6};
            \node[xshift=-1.5mm] at (anchor) {$\downarrow$};
      }{}
      % Circle right arrow down
      \ifthenelse{\isin{#1}{right} \AND \isin{#2}{down}}{
            \coordinate (anchor) at ($({#3})!{#5}!({#4})$);
            \node[msgcircle,xshift=5.0mm] at (anchor) {#6};
            \node[xshift=1.5mm] at (anchor) {$\downarrow$};
      }{}

      % Circle down arrow right
      \ifthenelse{\isin{#1}{down} \AND \isin{#2}{right}}{
            \coordinate (anchor) at ($({#3})!{#5}!({#4})$);
            \node[msgcircle, yshift=-5.0mm] at (anchor) {#6};
            \node[yshift=-2.0mm] at (anchor) {$\rightarrow$};
      }{}
      % Circle up arrow right
      \ifthenelse{\isin{#1}{up} \AND \isin{#2}{right}}{
            \coordinate (anchor) at ($({#3})!{#5}!({#4})$);
            \node[msgcircle, yshift=5.0mm] at (anchor) {#6};
            \node[yshift=2.0mm] at (anchor) {$\rightarrow$};
      }{}

      % Circle down arrow left
      \ifthenelse{\isin{#1}{down} \AND \isin{#2}{left}}{
            \coordinate (anchor) at ($({#3})!{#5}!({#4})$);
            \node[msgcircle, yshift=-5.0mm] at (anchor) {#6};
            \node[yshift=-2.0mm] at (anchor) {$\leftarrow$};
      }{}
      % Circle up arrow left
      \ifthenelse{\isin{#1}{up} \AND \isin{#2}{left}}{
            \coordinate (anchor) at ($({#3})!{#5}!({#4})$);
            \node[msgcircle, yshift=5.0mm] at (anchor) {#6};
            \node[yshift=2.0mm] at (anchor) {$\leftarrow$};
      }{}

      % Circle left arrow down
      \ifthenelse{\isin{#1}{left} \AND \isin{#2}{up}}{
            \coordinate (anchor) at ($({#3})!{#5}!({#4})$);
            \node[msgcircle, xshift=-5.0mm] at (anchor) {#6};
            \node[xshift=-1.5mm] at (anchor) {$\uparrow$};
      }{}
      % Circle right arrow down
      \ifthenelse{\isin{#1}{right} \AND \isin{#2}{up}}{
            \coordinate (anchor) at ($({#3})!{#5}!({#4})$);
            \node[msgcircle, xshift=5.0mm] at (anchor) {#6};
            \node[xshift=1.5mm] at (anchor) {$\uparrow$};
      }{}
}
\newcommand{\darkmsg}[6]{
      % Circle left arrow down
      \ifthenelse{\isin{#1}{left} \AND \isin{#2}{down}}{
            \coordinate (anchor) at ($({#3})!{#5}!({#4})$);
            \node[darkmsgcircle, xshift=-5mm] at (anchor) {#6};
            \node[xshift=-1.5mm] at (anchor) {$\downarrow$};
      }{}
      % Circle right arrow down
      \ifthenelse{\isin{#1}{right} \AND \isin{#2}{down}}{
            \coordinate (anchor) at ($({#3})!{#5}!({#4})$);
            \node[darkmsgcircle, xshift=5mm] at (anchor) {#6};
            \node[xshift=1.5mm] at (anchor) {$\downarrow$};
      }{}

      % Circle down arrow right
      \ifthenelse{\isin{#1}{down} \AND \isin{#2}{right}}{
            \coordinate (anchor) at ($({#3})!{#5}!({#4})$);
            \node[darkmsgcircle, yshift=-5.0mm] at (anchor) {#6};
            \node[yshift=-2.0mm] at (anchor) {$\rightarrow$};
      }{}
      % Circle up arrow right
      \ifthenelse{\isin{#1}{up} \AND \isin{#2}{right}}{
            \coordinate (anchor) at ($({#3})!{#5}!({#4})$);
            \node[darkmsgcircle, yshift=5.0mm] at (anchor) {#6};
            \node[yshift=2.0mm] at (anchor) {$\rightarrow$};
      }{}

      % Circle down arrow left
      \ifthenelse{\isin{#1}{down} \AND \isin{#2}{left}}{
            \coordinate (anchor) at ($({#3})!{#5}!({#4})$);
            \node[darkmsgcircle, yshift=-5.0mm] at (anchor) {#6};
            \node[yshift=-2.0mm] at (anchor) {$\leftarrow$};
      }{}
      % Circle up arrow left
      \ifthenelse{\isin{#1}{up} \AND \isin{#2}{left}}{
            \coordinate (anchor) at ($({#3})!{#5}!({#4})$);
            \node[darkmsgcircle, yshift=5.0mm] at (anchor) {#6};
            \node[yshift=2.0mm] at (anchor) {$\leftarrow$};
      }{}

      % Circle left arrow down
      \ifthenelse{\isin{#1}{left} \AND \isin{#2}{up}}{
            \coordinate (anchor) at ($({#3})!{#5}!({#4})$);
            \node[darkmsgcircle, xshift=-5.0mm] at (anchor) {#6};
            \node[xshift=-1.5mm] at (anchor) {$\uparrow$};
      }{}
      % Circle right arrow down
      \ifthenelse{\isin{#1}{right} \AND \isin{#2}{up}}{
            \coordinate (anchor) at ($({#3})!{#5}!({#4})$);
            \node[darkmsgcircle, xshift=5.0mm] at (anchor) {#6};
            \node[xshift=1.5mm] at (anchor) {$\uparrow$};
      }{}
}
\newcommand{\redbackmsg}[6]{
      % Circle left arrow down
      \ifthenelse{\isin{#1}{left} \AND \isin{#2}{down}}{
            \coordinate (anchor) at ($({#3})!{#5}!({#4})$);
            \node[reddarkmsgcircle, xshift=-5mm] at (anchor) {#6};
            \node[xshift=-1.5mm] at (anchor) {$\downarrow$};
      }{}
      % Circle right arrow down
      \ifthenelse{\isin{#1}{right} \AND \isin{#2}{down}}{
            \coordinate (anchor) at ($({#3})!{#5}!({#4})$);
            \node[reddarkmsgcircle, xshift=5mm] at (anchor) {#6};
            \node[xshift=1.5mm] at (anchor) {$\downarrow$};
      }{}

      % Circle down arrow right
      \ifthenelse{\isin{#1}{down} \AND \isin{#2}{right}}{
            \coordinate (anchor) at ($({#3})!{#5}!({#4})$);
            \node[reddarkmsgcircle, yshift=-5.0mm] at (anchor) {#6};
            \node[yshift=-2.0mm] at (anchor) {$\rightarrow$};
      }{}
      % Circle up arrow right
      \ifthenelse{\isin{#1}{up} \AND \isin{#2}{right}}{
            \coordinate (anchor) at ($({#3})!{#5}!({#4})$);
            \node[reddarkmsgcircle, yshift=5.0mm] at (anchor) {#6};
            \node[yshift=2.0mm] at (anchor) {$\rightarrow$};
      }{}

      % Circle down arrow left
      \ifthenelse{\isin{#1}{down} \AND \isin{#2}{left}}{
            \coordinate (anchor) at ($({#3})!{#5}!({#4})$);
            \node[reddarkmsgcircle, yshift=-5.0mm] at (anchor) {#6};
            \node[yshift=-2.0mm] at (anchor) {$\leftarrow$};
      }{}
      % Circle up arrow left
      \ifthenelse{\isin{#1}{up} \AND \isin{#2}{left}}{
            \coordinate (anchor) at ($({#3})!{#5}!({#4})$);
            \node[reddarkmsgcircle, yshift=5.0mm] at (anchor) {#6};
            \node[yshift=2.0mm] at (anchor) {$\leftarrow$};
      }{}

      % Circle left arrow down
      \ifthenelse{\isin{#1}{left} \AND \isin{#2}{up}}{
            \coordinate (anchor) at ($({#3})!{#5}!({#4})$);
            \node[reddarkmsgcircle, xshift=-5.0mm] at (anchor) {#6};
            \node[xshift=-1.5mm] at (anchor) {$\uparrow$};
      }{}
      % Circle right arrow down
      \ifthenelse{\isin{#1}{right} \AND \isin{#2}{up}}{
            \coordinate (anchor) at ($({#3})!{#5}!({#4})$);
            \node[reddarkmsgcircle, xshift=5.0mm] at (anchor) {#6};
            \node[xshift=1.5mm] at (anchor) {$\uparrow$};
      }{}
}
\newcommand{\redmsg}[6]{
      % Circle left arrow down
      \ifthenelse{\isin{#1}{left} \AND \isin{#2}{down}}{
            \coordinate (anchor) at ($({#3})!{#5}!({#4})$);
            \node[redmsgcircle, xshift=-5mm] at (anchor) {#6};
            \node[xshift=-1.5mm] at (anchor) {$\downarrow$};
      }{}
      % Circle right arrow down
      \ifthenelse{\isin{#1}{right} \AND \isin{#2}{down}}{
            \coordinate (anchor) at ($({#3})!{#5}!({#4})$);
            \node[redmsgcircle, xshift=5mm] at (anchor) {#6};
            \node[xshift=1.5mm] at (anchor) {$\downarrow$};
      }{}

      % Circle down arrow right
      \ifthenelse{\isin{#1}{down} \AND \isin{#2}{right}}{
            \coordinate (anchor) at ($({#3})!{#5}!({#4})$);
            \node[redmsgcircle, yshift=-5.0mm] at (anchor) {#6};
            \node[yshift=-2.0mm] at (anchor) {$\rightarrow$};
      }{}
      % Circle up arrow right
      \ifthenelse{\isin{#1}{up} \AND \isin{#2}{right}}{
            \coordinate (anchor) at ($({#3})!{#5}!({#4})$);
            \node[redmsgcircle, yshift=5.0mm] at (anchor) {#6};
            \node[yshift=2.0mm] at (anchor) {$\rightarrow$};
      }{}

      % Circle down arrow left
      \ifthenelse{\isin{#1}{down} \AND \isin{#2}{left}}{
            \coordinate (anchor) at ($({#3})!{#5}!({#4})$);
            \node[redmsgcircle, yshift=-5.0mm] at (anchor) {#6};
            \node[yshift=-2.0mm] at (anchor) {$\leftarrow$};
      }{}
      % Circle up arrow left
      \ifthenelse{\isin{#1}{up} \AND \isin{#2}{left}}{
            \coordinate (anchor) at ($({#3})!{#5}!({#4})$);
            \node[redmsgcircle, yshift=5.0mm] at (anchor) {#6};
            \node[yshift=2.0mm] at (anchor) {$\leftarrow$};
      }{}

      % Circle left arrow down
      \ifthenelse{\isin{#1}{left} \AND \isin{#2}{up}}{
            \coordinate (anchor) at ($({#3})!{#5}!({#4})$);
            \node[redmsgcircle, xshift=-5.0mm] at (anchor) {#6};
            \node[xshift=-1.5mm] at (anchor) {$\uparrow$};
      }{}
      % Circle right arrow down
      \ifthenelse{\isin{#1}{right} \AND \isin{#2}{up}}{
            \coordinate (anchor) at ($({#3})!{#5}!({#4})$);
            \node[redmsgcircle, xshift=5.0mm] at (anchor) {#6};
            \node[xshift=1.5mm] at (anchor) {$\uparrow$};
      }{}
}
\newcommand{\bwmsg}[6]{
      % doublecircle left arrow down
      \ifthenelse{\isin{#1}{left} \AND \isin{#2}{down}}{
            \coordinate (anchor) at ($({#3})!{#5}!({#4})$);
            \node[msgdoublecircle, xshift=-5.5mm] at (anchor) {#6};
            \node[xshift=-1.5mm] at (anchor) {$\downarrow$};
      }{}
      % doublecircle right arrow down
      \ifthenelse{\isin{#1}{right} \AND \isin{#2}{down}}{
            \coordinate (anchor) at ($({#3})!{#5}!({#4})$);
            \node[msgdoublecircle, xshift=5.5mm] at (anchor) {#6};
            \node[xshift=1.5mm] at (anchor) {$\downarrow$};
      }{}

      % doublecircle down arrow right
      \ifthenelse{\isin{#1}{down} \AND \isin{#2}{right}}{
            \coordinate (anchor) at ($({#3})!{#5}!({#4})$);
            \node[msgdoublecircle, yshift=-6.0mm] at (anchor) {#6};
            \node[yshift=-2.0mm] at (anchor) {$\rightarrow$};
      }{}
      % doublecircle up arrow right
      \ifthenelse{\isin{#1}{up} \AND \isin{#2}{right}}{
            \coordinate (anchor) at ($({#3})!{#5}!({#4})$);
            \node[msgdoublecircle, yshift=6.0mm] at (anchor) {#6};
            \node[yshift=2.0mm] at (anchor) {$\rightarrow$};
      }{}

      % doublecircle down arrow left
      \ifthenelse{\isin{#1}{down} \AND \isin{#2}{left}}{
            \coordinate (anchor) at ($({#3})!{#5}!({#4})$);
            \node[msgdoublecircle, yshift=-6.0mm] at (anchor) {#6};
            \node[yshift=-2.0mm] at (anchor) {$\leftarrow$};
      }{}
      % doublecircle up arrow left
      \ifthenelse{\isin{#1}{up} \AND \isin{#2}{left}}{
            \coordinate (anchor) at ($({#3})!{#5}!({#4})$);
            \node[msgdoublecircle, yshift=6.0mm] at (anchor) {#6};
            \node[yshift=2.0mm] at (anchor) {$\leftarrow$};
      }{}

      % doublecircle left arrow down
      \ifthenelse{\isin{#1}{left} \AND \isin{#2}{up}}{
            \coordinate (anchor) at ($({#3})!{#5}!({#4})$);
            \node[msgdoublecircle, xshift=-5.5mm] at (anchor) {#6};
            \node[xshift=-1.5mm] at (anchor) {$\uparrow$};
      }{}
      % doublecircle right arrow down
      \ifthenelse{\isin{#1}{right} \AND \isin{#2}{up}}{
            \coordinate (anchor) at ($({#3})!{#5}!({#4})$);
            \node[msgdoublecircle, xshift=5.5mm] at (anchor) {#6};
            \node[xshift=1.5mm] at (anchor) {$\uparrow$};
      }{}
}
\newcommand{\bwdarkmsg}[6]{
      % doublecircle left arrow down
      \ifthenelse{\isin{#1}{left} \AND \isin{#2}{down}}{
            \coordinate (anchor) at ($({#3})!{#5}!({#4})$);
            \node[darkmsgdoublecircle, xshift=-5.5mm] at (anchor) {#6};
            \node[xshift=-1.5mm] at (anchor) {$\downarrow$};
      }{}
      % doublecircle right arrow down
      \ifthenelse{\isin{#1}{right} \AND \isin{#2}{down}}{
            \coordinate (anchor) at ($({#3})!{#5}!({#4})$);
            \node[darkmsgdoublecircle, xshift=5.5mm] at (anchor) {#6};
            \node[xshift=1.5mm] at (anchor) {$\downarrow$};
      }{}

      % doublecircle down arrow right
      \ifthenelse{\isin{#1}{down} \AND \isin{#2}{right}}{
            \coordinate (anchor) at ($({#3})!{#5}!({#4})$);
            \node[darkmsgdoublecircle, yshift=-6.0mm] at (anchor) {#6};
            \node[yshift=-2.0mm] at (anchor) {$\rightarrow$};
      }{}
      % doublecircle up arrow right
      \ifthenelse{\isin{#1}{up} \AND \isin{#2}{right}}{
            \coordinate (anchor) at ($({#3})!{#5}!({#4})$);
            \node[darkmsgdoublecircle, yshift=6.0mm] at (anchor) {#6};
            \node[yshift=2.0mm] at (anchor) {$\rightarrow$};
      }{}

      % doublecircle down arrow left
      \ifthenelse{\isin{#1}{down} \AND \isin{#2}{left}}{
            \coordinate (anchor) at ($({#3})!{#5}!({#4})$);
            \node[darkmsgdoublecircle, yshift=-6.0mm] at (anchor) {#6};
            \node[yshift=-2.0mm] at (anchor) {$\leftarrow$};
      }{}
      % doublecircle up arrow left
      \ifthenelse{\isin{#1}{up} \AND \isin{#2}{left}}{
            \coordinate (anchor) at ($({#3})!{#5}!({#4})$);
            \node[darkmsgdoublecircle, yshift=6.0mm] at (anchor) {#6};
            \node[yshift=2.0mm] at (anchor) {$\leftarrow$};
      }{}

      % doublecircle left arrow down
      \ifthenelse{\isin{#1}{left} \AND \isin{#2}{up}}{
            \coordinate (anchor) at ($({#3})!{#5}!({#4})$);
            \node[darkmsgdoublecircle, xshift=-5.5mm] at (anchor) {#6};
            \node[xshift=-1.5mm] at (anchor) {$\uparrow$};
      }{}
      % doublecircle right arrow down
      \ifthenelse{\isin{#1}{right} \AND \isin{#2}{up}}{
            \coordinate (anchor) at ($({#3})!{#5}!({#4})$);
            \node[darkmsgdoublecircle, xshift=5.5mm] at (anchor) {#6};
            \node[xshift=1.5mm] at (anchor) {$\uparrow$};
      }{}
}
\tikzset{mainstyle/.style={fill=white, draw=black, shape=rectangle, align=center}}
\tikzset{dstyle/.style={mainstyle, minimum size=4mm, inner sep=0pt, text width=4mm}}
\tikzset{sstyle/.style={mainstyle, minimum size=5mm, inner sep=0pt, text width=5mm}}
\tikzset{ostyle/.style={fill=darkgrey, draw=black, shape=rectangle, minimum size=0.2cm, inner sep=0pt, text width=2mm}}
\tikzstyle{observation}=[ostyle]
\tikzstyle{deterministic}=[dstyle]
\tikzstyle{stochastic}=[sstyle]
\tikzstyle{filter}=[mainstyle, minimum width=1cm, minimum height=0.5cm]
\tikzstyle{selector}=[fill=white, draw=black, shape=trapezium, rotate=180, minimum width=1cm, minimum height=0.5cm]
\newcommand{\wouter}[2][] {\todo[inline,backgroundcolor=red!20!white, #1]{(Wouter) #2}}
\def\-{\text{-}}
\def\+{\text{+}}
\def\tm{\! - \!}
\def\tp{\! + \!}
\def\tr{\text{tr}}
\newcommand\given[1][]{\:#1\vert\:}
\newcommand{\smallbullet}{} % for safety
\DeclareRobustCommand\smallbullet{%
  \mathord{\mathpalette\smallbullet@{0.5}}%
}
\newcommand{\smallbullet@}[2]{%
  \vcenter{\hbox{\scalebox{#2}{$\m@th#1\bullet$}}}%
}
\newcommand*\wcircled[1]{\tikz[baseline=(char.base)]{
            \node[shape=circle,draw,minimum size=4mm,inner sep=0pt] (char) {#1};}}
\newtheorem{theorem}{Theorem}
\newtheorem{lemma}{Lemma}
\title{\LARGE \bf
Online Bayesian system identification\\ in multivariate autoregressive models via message passing
% Message passing-based inference in multivariate autoregressive models
% Message passing-based Bayesian inference\\ in multivariate autoregressive models
}
\author{Tim N. Nisslbeck$^{1}$ and Wouter M. Kouw$^{1}$% <-this % stops a space
\thanks{*Supported by the Eindhoven Artificial Intelligence Systems Institute.}% <-this % stops a space
\thanks{$^{1}$Nisslbeck and Kouw are with the Department of Electrical Engineering, Eindhoven University of Technology, Eindhoven, the Netherlands. Corresponding email: {\tt\small t.n.nisslbeck@tue.nl}}%
}
\begin{document}

\maketitle
\thispagestyle{empty}
\pagestyle{empty}

\begin{abstract}
We propose a recursive Bayesian estimation procedure for multivariate autoregressive models with exogenous inputs based on message passing in a factor graph. Unlike recursive least-squares, our method produces full posterior distributions for both the autoregressive coefficients and noise precision. The uncertainties regarding these estimates propagate into the uncertainties on predictions for future system outputs, and support online model evidence calculations. We demonstrate convergence empirically on a synthetic autoregressive system and competitive performance on a double mass-spring-damper system.
\end{abstract}

\section{INTRODUCTION} \label{sec:introduction}
Autoregressive models capture dynamical systems  with simple yet expressive structures \cite{tiao1964bayesian,hannan1989recursive,karlsson2013forecasting,nisslbeck2024coupled}.
In multivariate autoregressive models with exogenous inputs (MARX), the evolution of the signal incorporates past observations and controls, producing substantial uncertainty during parameter estimation.
Bayesian inference procedures can quantify this uncertainty and propagate it towards future predictions \cite{penny2007multivariate,shaarawy2008bayesian}.
Quantified uncertainty is valuable on its own, but also useful to sensor fusion, optimal experimental design and adaptive control \cite{castaldo2014multi,chaloner1995bayesian,ta2014factor,hoffmann2017linear,palmieri2022unifying}.
We present an exact recursive Bayesian estimator whose computation is distributed over a probabilistic graphical model.

% \subsection{Related work}
Bayesian inference in multivariate autoregressive models has a rich history, especially in econometrics \cite{tiao1964bayesian,karlsson2013forecasting}. Scientists typically employ Markov Chain Monte Carlo methods to obtain approximate posterior distributions, but such techniques are too computationally expensive for online system identification or adaptive control systems. Recursive estimators are more suited, but generally lack posterior uncertainty on parameters or produce approximate posterior distributions \cite{hannan1989recursive,penny2007multivariate}. 
% Shaarawy and Ali proposed an exact recursive Bayesian estimator using the matrix normal Wishart distribution and demonstrated its utility for system identification \cite{shaarawy2008bayesian}. 
We propose an exact recursive Bayesian estimator using the matrix normal Wishart distribution and cast the inference procedure as a message passing algorithm on a factor graph. Factor graphs have several key features. Firstly, they provide a more accessible visual representation of a probabilistic model and the structure of incoming data streams \cite{loeliger2007factor,palmieri2022unifying}. Secondly, by assigning computation to nodes, model design becomes more modular, allowing inference to be automated and distributed over devices \cite{podusenko2021message,bagaev2023reactive}. Thirdly, because messages propagate uncertainty along the graph, the final output uncertainty can be tracked back to individual sources of uncertainty, which allows a prediction to be explained in terms of, for example, volatility versus parameter uncertainty \cite{lecue2020role}. Lastly, message passing unifies a variety of algorithms, from signal filtering to optimal control and path planning \cite{podusenko2021message,hoffmann2017linear,palmieri2022unifying}.

% \cite{chaloner1995bayesian}
% \cite{assimakis2012information}
% \cite{blanco2021general}

% \subsection{Contributions}
Our contributions are computational in nature:
\begin{itemize}
    \item A message passing algorithm is derived for recursive Bayesian inference in MARX models.
    \item A distribution is derived for predicting future system outputs that incorporates parameter uncertainty. 
\end{itemize}
The performance of our proposed procedure is empirically evaluated in two experimental settings.
% \cite{price2012small}

\section{PROBLEM STATEMENT} \label{sec:problem}
We study discrete-time state-space systems that evolve over time according to a state transition function $f: \mathbb{R}^{D_z} \times \mathbb{R}^{D_u} \mapsto \mathbb{R}^{D_z}$ and outputs noisy measurements $y_t \in \mathbb{R}^{D_y}$ through a measurement function $g: \mathbb{R}^{D_z} \mapsto \mathbb{R}^{D_y}$:
\begin{equation} \label{eq:problem:state-transition}
z_t = f(z_{t\-1}, u_{t}) \, , \qquad 
y_t = g(z_t) + e_t \, ,
\end{equation}
with stochastic disturbance $e_t \in \mathbb{R}^{D_y}$ and the goal to predict future outputs $y_{\tau}$ for $\tau > t$ given future inputs $u_{\tau}$ using a model of the system's dynamics.

\section{MODEL SPECIFICATION} \label{sec:model-specification}
We consider an order-$N$ MARX model. Let $y_t \in \mathbb{R}^{D_y}$ be a multivariate signal and let % $\bar{y}_{t\-1}$ a matrix of previous values of the signal:
\begin{align}
\bar{y}_{t\-1}
&\triangleq
\begin{bmatrix}
    y_{t-1,1} & y_{t-2,1} & \dots & y_{t-N_y,1} \\
    \vdots & \dots & \dots & \vdots \\
    y_{t-1,D_y} & y_{t-2,D_y} & \dots & y_{t-N_y,D_y}
\end{bmatrix} \, ,
\end{align}
be the observation history of memory size $N_y$.
Similarly, we have a matrix of previous values of exogeneous (control) signals
\begin{align}
\bar{u}_{t}
&\triangleq
\begin{bmatrix}
    u_{t,1} & u_{t-1,1} & \dots & u_{t-N_u+1,1} \\
    \vdots & \dots & \dots & \vdots \\
    u_{t,D_u} & u_{t-1,D_u} & \dots & u_{t-N_u+1,D_u}
\end{bmatrix}
\end{align}
with control memory size $N_u$.
We shall reshape both matrices $\bar{y}_{t\-1}$ and $\bar{u}_{t\-1}$ into a $D_x =  N_y \times D_y + N_u \times D_u$ vector:
\begin{align} \label{eq:cat}
    x_t \triangleq \begin{bmatrix} \text{vec}(\bar{y}_{t\-1}) \  \text{vec}(\bar{u}_t) \end{bmatrix}^\intercal \, .
\end{align}
Our likelihood model is
\begin{align} \label{eq:param:likelihood}
p(&y_t \given A, W, x_t) = \mathcal{N}(y_t \given A^{\intercal} x_t, W^{-1}) \\
&= \sqrt{\frac{|W|}{(2\pi)^{D_y}}} \exp\big( \tm \frac{1}{2}(y_t \tm A^{\intercal} x_t)^\intercal W(y_t \tm A^{\intercal} x_t) \big) .
\end{align}
There are two unknowns: a matrix of autoregression coefficients \mbox{$A \in \mathbb{R}^{D_x \times D_y}$} and a precision matrix \mbox{$W \in \mathbb{R}^{D_y \times D_y}$}.
For computational convenience (see Section \ref{sec:inference:param}), we specify our prior distribution over $(A, W)$ to be a matrix normal Wishart distribution \cite[ID: D175]{joram_soch_2024_10495684}: % P159
\begin{align} \label{eq:param:prior}
     p(A,W) &= \mathcal{MNW}(A, W \given M_0, \Lambda_0^{-1}, \Omega_0^{-1}, \nu_0) \\
    &= \sqrt{\frac{ |\Omega_0|^{D_y} |\Lambda_0|^{\nu_0}}{(2\pi)^{D_x D_y} 2^{\nu_0 D_y}}} \frac{\sqrt{|W|^{\nu_0 + D_x - D_y-1}}}{\Gamma_{D_y}(\nu_0 / 2)}  \\
    &\quad \exp\big(\tm \frac{1}{2} \tr\big[W \big( (A \tm M_0)^{\intercal} \Lambda_0 (A \tm M_0) \! + \! \Omega_0 \big) \big] \big) , \nonumber 
\end{align}
where $\Gamma_{D_y}(\cdot)$ is the $D_y$-dimensional multivariate gamma function. 
The coefficient matrix $A$ follows a matrix normal distribution with mean \mbox{$M_0 \in \mathbb{R}^{D_x \times D_y}$}, row covariance \mbox{$\Lambda_0^{-1} \in \mathbb{R}^{D_x \times D_x}$}, and column covariance $W^{-1}$,
\begin{align}
    &p(A \given W) = \mathcal{MN}(A \, | M_0, \Lambda_0^{-1}, W^{-1}) \label{eq:priors:A} \\
    &= \sqrt{\frac{|W|^{D_x} |\Lambda_0|^{D_y }}{(2\pi)^{D_x D_y}}} \exp \big(\tm \frac{1}{2}\tr\big[W (A \tm M_0)^{\intercal} \Lambda_0 (A \tm M_0 ) \big] \big) . \nonumber 
\end{align}
The precision matrix $W$ follows a Wishart distribution with scale matrix \mbox{$\Omega_0^{-1} \in \mathbb{R}^{D_y \times D_y}$} and degrees of freedom $\nu_0 \in \mathbb{R}$
\begin{align}
    p(W) &= \mathcal{W}(W \, | \, \Omega_0^{-1}, \nu_0) \label{eq:priors:W} \\
    &= \! \sqrt{\frac{|\Omega_0|^{\nu_0}}{2^{\nu_0 D_y}}} \frac{\sqrt{|W|^{\nu_0 \- D_y \- 1}}}{\Gamma_{D_y}(\frac{\nu_0}{2})} \exp\big(\tm \frac{1}{2}\tr\big[W \Omega_0\big] \big) .
\end{align}
Our goal is to infer a posterior distribution over parameters $A$ and $W$, and later to utilize these parameter posterior distributions to make predictions for future outputs $y_{\tau}$.

\subsection{Factor graph}
The probabilistic graphical model for the recursive model is straightforward, as it constitutes only a prior distribution and a likelihood function. We present a Forney-style factor graph in Fig.~\ref{fig:ffg}, where nodes correspond to factors, edges to variables and an edge may only be connected to two nodes \cite{loeliger2007factor}. In the graph, time flows from left to right, predictions from top to bottom and corrections from bottom to top. The factor node labeled $\mathcal{MNW}$ represents the matrix normal Wishart prediction distribution with its prior parameters. The dotted box represents the composite likelihood node, consisting of the concatenation operation described in \eqref{eq:cat}, the dot product operation between the matrix of autoregressive coefficients $A$ and the memory $x_t$, and the stochastic disturbance. The equality node connects the parameters $A,W$ to the likelihood nodes for each $t$.
\begin{figure}[thb]
    \centering
    \scalebox{1.2}{\begin{tikzpicture}
    % internals of MARX
    \node [style=deterministic] (cat) {$\scriptstyle{[ \, ]}$};
    \node [style=deterministic, right=6mm of cat] (dot) {$\cdot$};
    \node [style=stochastic, right=14mm of dot] (N) {$\mathcal{N}$};
    % \draw [-] (dot) -- (N);
    \draw [-] (dot) -- node[below] {$\scriptstyle{(A^{\intercal}x_t,W)}$} (N);
    \draw [-] (cat) -- node[below] {$\scriptstyle{x_t}$} (dot);

    % inputs for cat
    % \matrix [anchor=center, row sep=1mm] (m) [left=4mm of cat] {
    %     \node [style=observation] (us) {};  \\
    %     \node [style=observation] (ys) {};  \\
    % };
    % \node [left of=us, node distance=5mm] (uss) {$\scriptstyle{\bar{u}_t}$};
    % \node [left of=ys, node distance=5mm] (yss) {$\scriptstyle{\bar{y}_{k\text{-}1}}$};
    % \draw[-] (us.east) -- ([yshift=1.5mm] cat.west);
    % \draw[-] (ys.east) -- ([yshift=-1.5mm] cat.west);
    \node [style=observation, above of=cat, node distance=8mm] (us) {};
    \node [style=observation, below of=cat, node distance=8mm] (ys) {};
    \node [left of=us, node distance=4mm] (uss) {$\scriptstyle{\bar{u}_t}$};
    \node [left of=ys, node distance=5mm] (yss) {$\scriptstyle{\bar{y}_{t\text{-}1}}$};
    \draw[-] (us) -- (cat);
    \draw[-] (ys) -- (cat);

    % output
    \node [style=observation, below = 8mm of N] (y) {};
    \node [right of=y, node distance=4mm] (yk) {$\scriptstyle{y_t}$};
    \draw [-] (N) -- (y);
    
    % parameters
    % \node [style=deterministic, above = 17mm of $(N)!0.5!(dot)$] (eqN) {$=$};
    % % parameter split node (TODO: Add a better symbol)
    % \node [style=deterministic, below = 10mm of eqN] (psplit) {$\cap$};
    % \draw [-] (psplit) -- (eqN);
    % \draw[-] (psplit.west) -| (dot.north);
    % \draw[-] (psplit.east) -| (N.north);
    \node [style=deterministic, above = 10mm of dot] (eqN) {$=$};
    \draw [-] (eqN) -- (dot);
    
    % Prior
    \node [left = 10mm of eqN] (prevzeta) {$\cdots$};
    \draw [-] (prevzeta) -- (eqN);
    \node [shape=rectangle, draw=black, inner ysep=2mm, minimum width=10mm, left=2mm of prevzeta] (zetaprior) {$\scriptstyle{\mathcal{MNW}}$};
    \draw [-] (zetaprior) -- (prevzeta);
    \node [style=observation, left = 2mm of zetaprior] (zetaprior_params) {};
    \node [left of=zetaprior_params, node distance=4mm] (zetaprior_params0) {$\begin{matrix} \scriptstyle{M_0} \\ \scriptstyle{\Lambda_0} \\ \scriptstyle{\Omega_0} \\ \scriptstyle{\nu_0} \end{matrix}$};
    \draw [-] (zetaprior_params) -- (zetaprior);

    % Posterior
    \node [right = 14mm of eqN] (zetapost) {$\cdots$};
    \draw [-] (eqN) -- (zetapost);
    \node [above of=eqN, node distance=4mm] {$\scriptstyle{(A, W)}$};

    \node[dashed, fit=(cat)(N), draw, inner sep=2mm] (box) {};

    % messages
    \msgcircle{up}{right}{prevzeta}{eqN}{0.5}{1};
    \msgcircle{right}{up}{dot}{eqN}{0.5}{2};
    \msgcircle{up}{right}{eqN}{zetapost}{0.5}{3};
    \msgcircle{left}{down}{N}{y}{0.7}{4};
    
\end{tikzpicture}}
    \caption{Forney-style factor graph of the MARX model in recursive form. A matrix normal Wishart node sends a prior message $1$ towards an equality node. A likelihood-based message $2$ passes upwards from the MARX likelihood node (dotted box), attached to observed variables $y_t$, $\bar{y}_{t\tm1}$ and $\bar{u}_t$. Combining the prior-based and likelihood-based message at the equality node yields the posterior (message $3$). Message 4 is the posterior predictive distribution for system output.}
    \label{fig:ffg}
\end{figure}
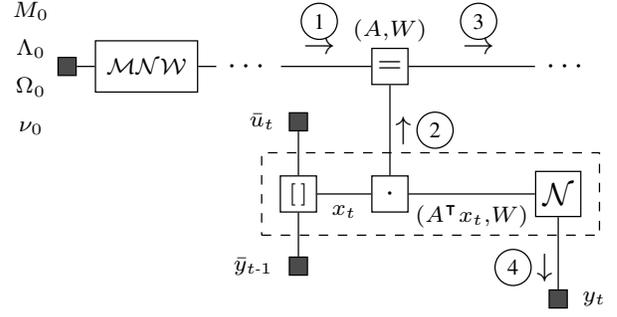

\section{INFERENCE} \label{sec:inference}
\subsection{Parameter estimation} \label{sec:inference:param}
We are interested in the posterior distribution, which we shall describe recursively:
\begin{align}
    p(A,W \given \mathcal{D}_{t}) = \frac{p(y_t \given A,W, x_t)}{p(y_t \given \mathcal{D}_t) } \ p(A,W \given \mathcal{D}_{t\-1}) \, ,
\end{align}
where the evidence term in the denominator is:
\begin{align} \label{eq:evidence}
& p(y_t | u_t, \mathcal{D}_{t\-1}) \! = \! \! \int \! \! p(y_t |  A, \! W, \! x_t) p(A, \! W |  \mathcal{D}_{t\-1}) \mathrm{d}(A,\! W) .
\end{align}
% This evidence term will be discussed in detail in Section~\ref{sec:model-evaluation}.

\begin{lemma} \label{eq:lemma1}
The MARX likelihood \eqref{eq:param:likelihood} combined with a matrix normal Wishart prior distribution over MARX coefficient matrix $A$ and precision matrix $W$ \eqref{eq:param:prior} yields a matrix normal Wishart distribution:
\begin{equation} \label{eq:param:posterior}
    p(A,W \given \mathcal{D}_t)
    = \mathcal{MNW}(A, W \given M_t, \Lambda_t^{-1}, \Omega_t^{-1}, \nu_t)
\end{equation}
with parameter update rules:
\begin{align}
    \nu_t
    &= \nu_{t\-1} + 1 , \\
    \Lambda_t
    &= \Lambda_{t\-1} + x_tx_t^\intercal , \\
    M_t
    &= (\Lambda_{t\-1} + x_tx_t^\intercal)^{-1}(\Lambda_{t\-1}M_{t\-1} + x_t y_t^\intercal)  , \text{ and } \\
    \Omega_t 
    &= \Omega_{t\-1} + y_ty_t^\intercal \! + \! M_{t\-1}^{\intercal}\Lambda_{t\-1}M_{t\-1} \\
    & - \! (\Lambda_{t\-1}M_{t\-1} \! + \! x_t y_t^\intercal)^{\intercal} (\Lambda_{t\-1} \! + \! x_tx_t^\intercal)^{\-1} (\Lambda_{t\-1}M_{t\-1} \! +\! x_t y_t^\intercal)  . \nonumber
\end{align}
\end{lemma}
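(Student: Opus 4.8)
The plan is to verify conjugacy directly: I will show that the product of the likelihood \eqref{eq:param:likelihood} with the prior \eqref{eq:param:prior} has, up to a factor independent of $(A,W)$, precisely the functional form of a matrix normal Wishart density carrying the stated parameters. Because this family is normalizable, matching the unnormalized shape is enough to identify the posterior uniquely, and the evidence \eqref{eq:evidence} then supplies the correct normalizer.

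The first step is to put the likelihood's exponent into a trace against $W$ so it can be merged with the prior. Using the scalar--trace identity
\[
(y_t \tm A^{\intercal} x_t)^{\intercal} W (y_t \tm A^{\intercal} x_t) = \tr\big[ W (y_t \tm A^{\intercal} x_t)(y_t \tm A^{\intercal} x_t)^{\intercal} \big] ,
\]
the combined exponent becomes a single trace $\tm\tfrac{1}{2}\tr[W(\cdot)]$, while the determinant prefactors multiply to $|W|^{(\nu_{t\-1} \tp D_x \tm D_y \tm 1)/2} \cdot |W|^{1/2}$. Reading off the power of $|W|$ gives $\nu_t = \nu_{t\-1} \tp 1$ at once.

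The heart of the proof is a matrix completion of the square in $A$ inside that trace. Expanding and grouping by degree in $A$, the quadratic part is $A^{\intercal}(\Lambda_{t\-1} \tp x_t x_t^{\intercal})A$, which gives $\Lambda_t = \Lambda_{t\-1} \tp x_t x_t^{\intercal}$; using symmetry of $\Lambda_{t\-1}$, the linear part pairs $A$ with $P_t \triangleq \Lambda_{t\-1}M_{t\-1} \tp x_t y_t^{\intercal}$. The identity
\[
A^{\intercal}\Lambda_t A \tm A^{\intercal}P_t \tm P_t^{\intercal}A = (A \tm \Lambda_t^{-1}P_t)^{\intercal}\Lambda_t(A \tm \Lambda_t^{-1}P_t) \tm P_t^{\intercal}\Lambda_t^{-1}P_t
\]
then forces $M_t = \Lambda_t^{-1}P_t$, matching the stated mean update. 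Collecting the $A$-free remainder --- the residual $\tm P_t^{\intercal}\Lambda_t^{-1}P_t$ together with $y_t y_t^{\intercal}$, $M_{t\-1}^{\intercal}\Lambda_{t\-1}M_{t\-1}$ and $\Omega_{t\-1}$ --- yields exactly the claimed $\Omega_t$.

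I expect the only genuinely delicate points to be the matrix completion of the square and the careful bookkeeping of the residual term $\tm P_t^{\intercal}\Lambda_t^{-1}P_t$ as it folds into $\Omega_t$; everything else is routine trace manipulation. As a consistency check worth stating, the prior's exponent $(\nu_{t\-1}\tp D_x \tm D_y \tm 1)/2$ on $|W|$ decomposes as the matrix-normal contribution $D_x/2$ plus the Wishart contribution $(\nu_{t\-1}\tm D_y \tm 1)/2$, so that appending the likelihood's $|W|^{1/2}$ leaves $D_x$ and $D_y$ untouched and advances only the degrees of freedom, confirming $\nu_t = \nu_{t\-1}\tp 1$.
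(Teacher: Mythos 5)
Your proof is correct and takes essentially the same route as the paper's: both rewrite the likelihood exponent as a trace against $W$, multiply with the prior, complete the square in $A$ inside the trace (your $P_t$ is the paper's $\xi_t$, and your residual $-P_t^{\intercal}\Lambda_t^{-1}P_t$ is exactly the paper's $-M_t^{\intercal}\Lambda_t M_t$ term folded into $\Omega_t$), and read $\nu_t = \nu_{t-1}+1$ off the power of $|W|$. No gaps; your closing consistency check on the $|W|$ exponent is a nice addition but not a departure from the paper's argument.
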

See the corresponding proof in Appendix~\ref{sec:app:params}. This solution can be expressed as a message passing procedure on a factor graph, enabling distributed computation \cite{loeliger2007factor}. 

The circled numbers in Fig.~\ref{fig:ffg} denote messages passed between factor nodes along edges. Message $\wcircled{1}$ represents the previous posterior belief over coefficients and precision:
\begin{align} \label{eq:message1}
    \overrightarrow{\wcircled{1}} &= p(A,W \given \mathcal{D}_{t\-1}) \\
    &= \mathcal{MNW}(A, W \given M_{t\-1}, \Lambda_{t\-1}^{-1}, \Omega_{t\-1}^{-1}, \nu_{t\-1}) \, .
\end{align}

The sum-product message from the composite MARX likelihood towards its parameters is the likelihood function itself, re-expressible as a probability distribution over $A,W$. 
\begin{lemma} \label{eq:lemma:backwards}
    The message from the composite MARX likelihood \eqref{eq:param:likelihood} towards its parameters is matrix normal Wishart distributed:
    \begin{align} \label{eq:message2}
        \uparrow \wcircled{2} &= p(y_t \given A,W, x_t) \\
        &\propto \mathcal{MNW}(A,W \given \bar{M}_t, \bar{\Lambda}_t^{-1}, \bar{\Omega}^{-1}_t, \bar{\nu}_t) \, .
    \end{align}
    Its parameters are:
    \begin{align}
        \bar{\nu}_t &= 2 - D_x + D_y \, , \qquad \ 
        \bar{\Lambda}_t = x_t x_t^{\intercal} \, , \\
        \bar{M}_t &= (x_t x_t^{\intercal})^{-1} x_t y_t^{\intercal} \, , \qquad 
        \bar{\Omega}_t = 0_{D_y \times D_y} \, .
    \end{align}
\end{lemma}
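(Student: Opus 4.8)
The plan is to treat the Gaussian likelihood \eqref{eq:param:likelihood} as an unnormalized function of the pair $(A,W)$ and to match it term-by-term against the matrix normal Wishart functional form appearing in \eqref{eq:param:prior}, reading off the four parameters by inspection. Only two quantities need to be reconciled: the power of $|W|$ in the prefactor, and the quadratic form sitting inside the exponential. Since the likelihood is not normalizable as a distribution over $(A,W)$, the target is proportionality rather than equality, which is reflected in the $\propto$ symbol.

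First I would rewrite the scalar quadratic in the exponent as a trace, namely $(y_t \tm A^{\intercal} x_t)^{\intercal} W (y_t \tm A^{\intercal} x_t) = \tr[W (y_t \tm A^{\intercal} x_t)(y_t \tm A^{\intercal} x_t)^{\intercal}]$, so that the likelihood assumes the shape $|W|^{1/2}\exp(\tm \tfrac12 \tr[W\,Q(A)])$ with $Q(A) = (y_t \tm A^{\intercal} x_t)(y_t \tm A^{\intercal} x_t)^{\intercal}$. Comparing the prefactor $|W|^{1/2}$ against the MNW power $|W|^{(\nu + D_x - D_y - 1)/2}$ forces $\nu + D_x - D_y - 1 = 1$, which immediately yields $\bar{\nu}_t = 2 - D_x + D_y$.

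Next I would expand $Q(A) = y_t y_t^{\intercal} - A^{\intercal} x_t y_t^{\intercal} - y_t x_t^{\intercal} A + A^{\intercal} x_t x_t^{\intercal} A$ and complete the square in $A$ against the target $(A \tm \bar{M}_t)^{\intercal} \bar{\Lambda}_t (A \tm \bar{M}_t) + \bar{\Omega}_t$. Matching the quadratic term gives $\bar{\Lambda}_t = x_t x_t^{\intercal}$; matching the cross terms gives $\bar{\Lambda}_t \bar{M}_t = x_t y_t^{\intercal}$, hence $\bar{M}_t = (x_t x_t^{\intercal})^{-1} x_t y_t^{\intercal}$; and collecting the remaining $A$-free part gives $\bar{\Omega}_t = y_t y_t^{\intercal} - \bar{M}_t^{\intercal} \bar{\Lambda}_t \bar{M}_t$. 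A one-line computation using $x_t^{\intercal} (x_t x_t^{\intercal})^{-1} x_t = 1$ then collapses this remainder to $\bar{\Omega}_t = 0_{D_y \times D_y}$, completing the identification of all four parameters.

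The main obstacle is that $\bar{\Lambda}_t = x_t x_t^{\intercal}$ is a rank-one outer product, hence singular whenever $D_x > 1$, so the literal inverse in $\bar{M}_t$ does not exist and $\bar{\nu}_t$ may even be negative; the message is therefore an improper MNW form, not a normalizable density. I would resolve this by interpreting the relevant inverse as a Moore--Penrose pseudoinverse: for a vector $x_t$ one has $(x_t x_t^{\intercal})^{+} = (x_t^{\intercal} x_t)^{-2} x_t x_t^{\intercal}$, from which $x_t^{\intercal} (x_t x_t^{\intercal})^{+} x_t = 1$ follows directly. This both legitimizes $\bar{M}_t$ as the minimum-norm solution of $\bar{\Lambda}_t \bar{M}_t = x_t y_t^{\intercal}$ and verifies the vanishing of $\bar{\Omega}_t$. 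With the parameters so identified, the likelihood and the MNW form coincide up to a factor independent of $(A,W)$, which is exactly the asserted proportionality.
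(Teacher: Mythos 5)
Your proof is correct and is essentially the paper's own argument: rewrite the exponent as $\mathrm{tr}[W\,Q(A)]$, complete the square in $A$ to identify $\bar{\Lambda}_t = x_t x_t^{\intercal}$ and $\bar{M}_t$, cancel the $A$-free remainder using $x_t^{\intercal}(x_t x_t^{\intercal})^{-1} x_t = 1$, and match the power of $|W|$ against the $\mathcal{MNW}$ functional form to obtain $\bar{\nu}_t = 2 - D_x + D_y$. The only substantive difference is your explicit handling of the rank-one singularity of $x_t x_t^{\intercal}$ via the Moore--Penrose pseudoinverse, with the verification $x_t^{\intercal}(x_t x_t^{\intercal})^{+} x_t = 1$; the paper's appendix manipulates the literal inverse $(x_t x_t^{\intercal})^{-1}$ without comment, so your remark adds rigor (and makes precise in what sense the message is an improper, non-normalizable $\mathcal{MNW}$ form) rather than departing from the paper's route.
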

The proof is in Appendix~\ref{sec:app:backward}.

Message $\wcircled{3}$ is the result of the multiplication of messages $\wcircled{1}$ and $\wcircled{2}$ in the equality node \cite{loeliger2007factor}.  
\begin{lemma} \label{eq:lemma:product}
    Let $p_1,p_2$ be two matrix normal Wishart distributions over the same random variables $A,W$:
    \begin{align}
        p_1(A,W) &= \mathcal{MNW}(A,W \given M_1, \Lambda_1^{-1}, \Omega_1^{-1}, \nu_1) \\
        p_2(A,W) &= \mathcal{MNW}(A,W \given M_2, \Lambda_2^{-1}, \Omega_2^{-1}, \nu_2) \, .
    \end{align}
    Their product is proportional to another matrix normal Wishart distribution:
    \begin{align}
        p_1( A, \! W  ) p_2( A, \! W )  \propto  \mathcal{MNW}(A,\! W | M_3, \Lambda_3^{\-1}, \Omega_3^{\-1}, \nu_3) 
    \end{align}
    whose parameters are combinations of $p_1,p_2$'s parameters,
    \begin{align}
    \nu_3 &= \nu_1 + \nu_2 + D_x - D_y -1 \, , \\ % \quad \ 
    \Lambda_3 &= \Lambda_1 + \Lambda_2 \, , \\
    M_3 &= (\Lambda_1 + \Lambda_2)^{-1} (\Lambda_1 M_1 + \Lambda_2 M_2) \, , \\
    \Omega_3 &= \Omega_1  +  \Omega_2  + M_1^{\intercal} \Lambda_1 M_1  +  M_2^{\intercal} \Lambda_2 M_2 \\
    & - \! (\Lambda_1 M_1 \! + \! \Lambda_2 M_2)^{\intercal} (\Lambda_1 \! + \! \Lambda_2)^{\-1} (\Lambda_1 M_1 \! + \! \Lambda_2 M_2) . \nonumber
\end{align}
\end{lemma}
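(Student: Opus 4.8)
The plan is to work entirely with the unnormalized $(A,W)$-kernels of the two densities. In the matrix normal Wishart density \eqref{eq:param:prior} the normalizing prefactors — the square-root term $\sqrt{|\Omega_i|^{D_y}|\Lambda_i|^{\nu_i}/((2\pi)^{D_xD_y}2^{\nu_iD_y})}$ and the multivariate gamma factor $\Gamma_{D_y}(\nu_i/2)$ — depend only on the fixed parameters $(M_i,\Lambda_i,\Omega_i,\nu_i)$ and not on the random variables $(A,W)$, so they contribute only to the proportionality constant and may be dropped. After this reduction, $p_1(A,W)\,p_2(A,W)$ becomes a product of two factors of the form $|W|^{(\nu_i+D_x-D_y-1)/2}\exp(\tm\tfrac{1}{2}\tr[W(\smallbullet)])$, and the task is to recognize the product as a single kernel of the same shape.

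First I would match the power of $|W|$: multiplying the two determinant factors adds the exponents to give $|W|^{(\nu_1+\nu_2+2D_x-2D_y-2)/2}$, and equating this to the target $|W|^{(\nu_3+D_x-D_y-1)/2}$ immediately yields $\nu_3=\nu_1+\nu_2+D_x-D_y-1$. Next, the exponents add inside a single trace against $W$, giving $\tr[W\,(Q(A)+\Omega_1+\Omega_2)]$ where $Q(A)=(A\tm M_1)^{\intercal}\Lambda_1(A\tm M_1)+(A\tm M_2)^{\intercal}\Lambda_2(A\tm M_2)$. The heart of the proof is to complete the square in the matrix variable $A$.

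Expanding $Q(A)$ and collecting the terms quadratic, linear, and constant in $A$, the quadratic coefficient is $\Lambda_1+\Lambda_2$, which forces $\Lambda_3=\Lambda_1+\Lambda_2$; matching the linear part forces $\Lambda_3 M_3=\Lambda_1 M_1+\Lambda_2 M_2$, i.e. $M_3=(\Lambda_1+\Lambda_2)^{-1}(\Lambda_1 M_1+\Lambda_2 M_2)$. This is the matrix analogue of the familiar scalar Gaussian-product identity; the main obstacle — noncommutativity — turns out to be benign here, since each $\Lambda_i$ is symmetric and the entire expression lives inside one trace against $W$, so the cross terms pair up exactly as in the scalar case. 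The completion gives $Q(A)=(A\tm M_3)^{\intercal}\Lambda_3(A\tm M_3)+R$ with an $A$-independent residual $R=M_1^{\intercal}\Lambda_1 M_1+M_2^{\intercal}\Lambda_2 M_2-(\Lambda_1 M_1+\Lambda_2 M_2)^{\intercal}\Lambda_3^{-1}(\Lambda_1 M_1+\Lambda_2 M_2)$.

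Finally I would absorb $R$ together with $\Omega_1+\Omega_2$, setting $\Omega_3=\Omega_1+\Omega_2+R$, which is precisely the claimed expression. The product kernel has then been rewritten as $|W|^{(\nu_3+D_x-D_y-1)/2}\exp(\tm\tfrac{1}{2}\tr[W((A\tm M_3)^{\intercal}\Lambda_3(A\tm M_3)+\Omega_3)])$, which is exactly the kernel of $\mathcal{MNW}(A,W\given M_3,\Lambda_3^{-1},\Omega_3^{-1},\nu_3)$, establishing the stated proportionality. The one consistency point worth checking is that $\Omega_3$ is a valid (symmetric) Wishart scale matrix: symmetry of $R$ follows from the symmetry of $\Lambda_3^{-1}$, since $R$ is a sum of terms of the form $B^{\intercal}\Lambda_i B$ and $B^{\intercal}\Lambda_3^{-1}B$.
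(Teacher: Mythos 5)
Your proposal is correct and takes essentially the same route as the paper's own proof: both reduce $p_1 p_2$ to the unnormalized kernels, add the $|W|$ exponents to obtain $\nu_3 = \nu_1 + \nu_2 + D_x - D_y - 1$, complete the square in $A$ using $\Lambda_3 = \Lambda_1 + \Lambda_2$ and $\Lambda_3 M_3 = \Lambda_1 M_1 + \Lambda_2 M_2$, and absorb the $A$-independent residual together with $\Omega_1 + \Omega_2$ into $\Omega_3$. Your closing remark on the symmetry of $\Omega_3$ is a small extra sanity check the paper does not spell out, but otherwise the arguments coincide step for step.
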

See Appendix~\ref{sec:app:product} for the proof.
\begin{theorem}
    The equality node's outgoing message is proportional to the exact recursive posterior distribution;
    \begin{align}
        \overrightarrow{\wcircled{3}} = \overrightarrow{\wcircled{1}} \cdot \wcircled{2} \uparrow
        &\propto \mathcal{MNW}(A,W \given M_t, \Lambda_t^{-1}, \Omega_t^{-1}, \nu_t)
    \end{align}
\end{theorem}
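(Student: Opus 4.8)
The plan is to read the theorem off as a direct composition of the three preceding lemmas, leaving only a parameter-matching verification. By construction the equality node multiplies its two incoming messages, so $\overrightarrow{\wcircled{3}}$ equals the product of message $\overrightarrow{\wcircled{1}}$ and message $\uparrow\wcircled{2}$. Message $\overrightarrow{\wcircled{1}}$ is the previous posterior, matrix normal Wishart with parameters $(M_{t\-1}, \Lambda_{t\-1}^{-1}, \Omega_{t\-1}^{-1}, \nu_{t\-1})$ by \eqref{eq:message1}, and message $\uparrow\wcircled{2}$ is proportional to a matrix normal Wishart with parameters $(\bar M_t, \bar\Lambda_t^{-1}, \bar\Omega_t^{-1}, \bar\nu_t)$ by Lemma~\ref{eq:lemma:backwards}. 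Since both factors share the same functional form over the same variables $(A,W)$, Lemma~\ref{eq:lemma:product} applies directly and shows the product is proportional to a matrix normal Wishart distribution. It then remains to substitute the two parameter sets into the combination rules of Lemma~\ref{eq:lemma:product} and to check that the result coincides with the update rules of Lemma~\ref{eq:lemma1}.

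First I would identify $p_1$ with message $\overrightarrow{\wcircled{1}}$ and $p_2$ with message $\uparrow\wcircled{2}$, i.e. $(\Lambda_1, M_1, \Omega_1, \nu_1) = (\Lambda_{t\-1}, M_{t\-1}, \Omega_{t\-1}, \nu_{t\-1})$ and $(\Lambda_2, M_2, \Omega_2, \nu_2) = (\bar\Lambda_t, \bar M_t, \bar\Omega_t, \bar\nu_t)$. The degrees of freedom then give $\nu_3 = \nu_{t\-1} + (2 \tm D_x \tp D_y) + D_x \tm D_y \tm 1 = \nu_{t\-1} + 1 = \nu_t$, and the row precision gives $\Lambda_3 = \Lambda_{t\-1} + x_t x_t^{\intercal} = \Lambda_t$. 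For the mean, the only nontrivial step is the cancellation $\bar\Lambda_t \bar M_t = x_t y_t^{\intercal}$ (the rank-one singularity of $\bar\Lambda_t$, discussed below, is immaterial for this product), after which $M_3 = (\Lambda_{t\-1} + x_t x_t^{\intercal})^{-1}(\Lambda_{t\-1} M_{t\-1} + x_t y_t^{\intercal}) = M_t$. Each of these matches Lemma~\ref{eq:lemma1} directly.

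The delicate step, and the main obstacle, is the scale matrix $\Omega_3$. With $\bar\Omega_t = 0$ the combination rule reduces to $\Omega_3 = \Omega_{t\-1} + M_{t\-1}^{\intercal}\Lambda_{t\-1}M_{t\-1} + \bar M_t^{\intercal}\bar\Lambda_t \bar M_t - (\Lambda_{t\-1}M_{t\-1} + x_t y_t^{\intercal})^{\intercal}(\Lambda_{t\-1} + x_t x_t^{\intercal})^{-1}(\Lambda_{t\-1}M_{t\-1} + x_t y_t^{\intercal})$, so matching the $\Omega_t$ of Lemma~\ref{eq:lemma1} requires exactly $\bar M_t^{\intercal}\bar\Lambda_t \bar M_t = y_t y_t^{\intercal}$. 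Here one must be careful: $\bar\Lambda_t = x_t x_t^{\intercal}$ is rank one and hence singular, so both $\bar M_t$ and the quadratic form have to be interpreted through the Moore--Penrose pseudoinverse. Using $(x_t x_t^{\intercal})^{+} = x_t x_t^{\intercal} / (x_t^{\intercal} x_t)^2$ yields the scalar identity $x_t^{\intercal}(x_t x_t^{\intercal})^{+} x_t = 1$, whence $\bar M_t^{\intercal}\bar\Lambda_t \bar M_t = y_t\, x_t^{\intercal}(x_t x_t^{\intercal})^{+} x_t\, y_t^{\intercal} = y_t y_t^{\intercal}$, as needed. I would also note that this degeneracy is harmless for applying Lemma~\ref{eq:lemma:product}, because message $\uparrow\wcircled{2}$ enters its combination rules only through the finite quantities $\bar\Lambda_t \bar M_t = x_t y_t^{\intercal}$ and $\bar M_t^{\intercal}\bar\Lambda_t \bar M_t = y_t y_t^{\intercal}$; the improperness of message $2$ is then absorbed into the proportionality symbol, which is all the theorem claims. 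Collecting the four matched parameters establishes $\overrightarrow{\wcircled{3}} \propto \mathcal{MNW}(A,W \given M_t, \Lambda_t^{-1}, \Omega_t^{-1}, \nu_t)$.
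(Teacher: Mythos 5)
Your proof is correct and takes essentially the same route as the paper's: identify message $\overrightarrow{\wcircled{1}}$ and message $\uparrow\wcircled{2}$ as the two factors $p_1, p_2$ in Lemma~\ref{eq:lemma:product}, substitute their parameters into its combination rules, and verify the result coincides with the update rules of Lemma~\ref{eq:lemma1}. Your handling of the rank-one degeneracy of $\bar{\Lambda}_t = x_t x_t^{\intercal}$ via the Moore--Penrose pseudoinverse is in fact more careful than the paper, which tacitly treats $(x_t x_t^{\intercal})^{-1}$ as an ordinary inverse; the cancellations $\bar{\Lambda}_t \bar{M}_t = x_t y_t^{\intercal}$ and $\bar{M}_t^{\intercal}\bar{\Lambda}_t\bar{M}_t = y_t y_t^{\intercal}$ that you verify explicitly are precisely what the paper's one-line parameter substitution relies on.
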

\begin{proof} Combining parameters from the messages in \eqref{eq:message1} and \eqref{eq:message2} according the product operation in Lemma~\ref{eq:lemma:product}, yields:
\begin{align}
    \nu_t &= \nu_{t\-1} + 1 \, , \\ % \qquad \ 
    \Lambda_t &= \Lambda_{t\-1} + x_t x_t^{\intercal} \, , \\
    M_t &= (\Lambda_{t\-1} + x_t x_t^{\intercal})^{-1}(\Lambda_{t\-1} M_{t\-1} + x_t y_t^{\intercal}) \, , \\
    \Omega_t &= \Omega_{t\-1} + M_{t\-1}^{\intercal} \Lambda_{t\-1} M_{t\-1} + y_t y_t^{\intercal} \\
    &\quad \tm (\Lambda_{t\-1} M_{t\-1} \tp x_t y_t^{\intercal})^{\intercal} (\Lambda_{t\-1} \tp x_t x_t^{\intercal})^{\-1} (\Lambda_{t\-1} M_{t\-1} \tp x_t y_t^{\intercal}) \nonumber
\end{align}
These match the parameter update rules of Lemma~\ref{eq:lemma1}.    
\end{proof}

\subsection{Output prediction} \label{sec:inference:prediction}
\wouter{Consider removing output prediction section entirely (which means experiments will purely be based on parameter space evaluations. This would give room for details on derivations, discussions on distributed computing.}
Predicting future system outputs in our probabilistic model constitutes deriving the posterior predictive distribution, i.e., the marginal distribution of $y_{\tau}$ for $\tau > t$:
\begin{align}
 \downarrow \wcircled{4} &= p(y_{\tau} \given u_{\tau}, \mathcal{D}_t)  \\
&= \int p(y_{\tau} \given A, W, x_{\tau}) p(A, W \given \mathcal{D}_t) \; \mathrm{d}(A, W) \, . \label{eq:postpred}
% &= \int \int p(y_* \given A, W, \mathcal{D}_*) p(A \given W) p(W \given \mathcal{D}_t) \; dA \; dW \\
% &= \! \int \! \big( \! \int \! p(y_{\tau} | A, W, x_{\tau}) p(A | W \! ; \! \mathcal{D}_t) dA \big) p(W | \mathcal{D}_t) dW %\\
%&= \int p(y_* \given A, W) p(W \given \mathcal{D}_t) \; dW
\end{align}
We can utilize the factorisation structure of the parameter posterior to split this into a marginalization over $A$,
\begin{align}
p(y_{\tau} | W, u_{\tau}, \mathcal{D}_t) \! = \! \int \! p(y_{\tau} | A, W, x_{\tau}) p(A | W, \mathcal{D}_t)  \mathrm{d}A \, ,
\end{align}
and a marginalization over $W$,
\begin{align}
    p(y_{\tau} \given u_{\tau},  \mathcal{D}_t)  = \! \int \! p(y_{\tau} \given W, u_{\tau},  \mathcal{D}_t) p(W \given \mathcal{D}_t) \mathrm{d}W \, .
\end{align}
\begin{theorem}
The marginalization of the composite MARX likelihood function \eqref{eq:param:likelihood} over a matrix normal distribution \eqref{eq:priors:A} yields a multivariate normal distribution:
\begin{align}
    \int  \mathcal{N}\big(y_{\tau} &\given A^{\intercal} x_{\tau}, W^{-1} \big) \mathcal{MN}\big(A \given M_t, \Lambda_t^{-1}, W^{-1} \big) \mathrm{d} A \nonumber \\
    &= \mathcal{N}\big(y_{\tau} \given M_t^{\intercal} x_{\tau}, (\lambda_{\tau} W)^{-1} \big) \, ,
\end{align}
where $\lambda_{\tau} \triangleq (1 + x_{\tau}^{\intercal} \Lambda_t^{-1} x_{\tau} )^{-1}$.
\end{theorem}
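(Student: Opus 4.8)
The plan is to avoid integrating the product of densities head-on and instead recognize the integrand as an ordinary Gaussian linear model. Writing $y_\tau = A^{\intercal} x_\tau + e_\tau$ with $e_\tau \sim \mathcal{N}(0, W^{-1})$ independent of $A$, and conditioning throughout on $W$, both $A$ (matrix normal) and $e_\tau$ (multivariate normal) are Gaussian, and the map $A \mapsto A^{\intercal} x_\tau$ is linear in $A$. Hence $y_\tau \given W$ is multivariate normal, and the entire task reduces to computing its first two moments; the Gaussian functional form and its normalizing constant then follow automatically, sidestepping any explicit completion of the square.

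First I would compute the mean by linearity of expectation. Since $\E{}{A} = M_t$ under $\mathcal{MN}(A \given M_t, \Lambda_t^{-1}, W^{-1})$ and $\E{}{e_\tau}=0$, we immediately get $\E{}{y_\tau \given W} = M_t^{\intercal} x_\tau$, which matches the claimed mean. Next I would compute the covariance using the defining second-moment identity of the matrix normal, namely $\mathrm{Cov}(A_{ij}, A_{kl}) = (\Lambda_t^{-1})_{ik}\,(W^{-1})_{jl}$, where $\Lambda_t^{-1}$ plays the role of row covariance and $W^{-1}$ the column covariance. Writing the $i$-th entry of $A^{\intercal} x_\tau$ as $x_\tau^{\intercal} A_{\smallbullet i}$ (the $i$-th column of $A$), this identity gives $\mathrm{Cov}\big((A^{\intercal} x_\tau)_i, (A^{\intercal} x_\tau)_l\big) = (W^{-1})_{il}\, x_\tau^{\intercal} \Lambda_t^{-1} x_\tau$, i.e.
\begin{align}
\mathrm{Cov}(A^{\intercal} x_\tau \given W) = \big(x_\tau^{\intercal} \Lambda_t^{-1} x_\tau\big)\, W^{-1} \, .
\end{align}

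Finally, since $e_\tau$ is independent of $A$, I would add its covariance $W^{-1}$ to obtain the total covariance of $y_\tau \given W$, namely $\big(1 + x_\tau^{\intercal} \Lambda_t^{-1} x_\tau\big) W^{-1}$. Identifying $\lambda_\tau^{-1} = 1 + x_\tau^{\intercal} \Lambda_t^{-1} x_\tau$ rewrites this as $(\lambda_\tau W)^{-1}$, delivering $\mathcal{N}\big(y_\tau \given M_t^{\intercal} x_\tau, (\lambda_\tau W)^{-1}\big)$ and completing the proof. I expect the only delicate step to be the covariance bookkeeping in the second paragraph: one must keep straight which Kronecker factor is the among-row versus among-column covariance, so that the scalar $x_\tau^{\intercal} \Lambda_t^{-1} x_\tau$ attaches to $W^{-1}$ rather than the reverse. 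A direct alternative would be to vectorize $A$ and complete the square in $\mathrm{vec}(A)$, but the transpose in $A^{\intercal} x_\tau$ forces commutation-matrix algebra, so the moment-matching route above is cleaner and the guaranteed Gaussianity makes the normalizer free.
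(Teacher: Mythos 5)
Your proof is correct, but it takes a genuinely different route from the paper's. The paper proves the identity by brute-force integration: it expands the product of the two density kernels, completes the square in $A$, integrates out the resulting matrix-normal kernel against its normalizer, and then needs both the matrix determinant lemma (to reduce $|\Lambda_t|^{D_y}|\Lambda_t + x_\tau x_\tau^{\intercal}|^{-D_y}$ to $\lambda_\tau^{D_y}$) and the Sherman--Morrison formula (to collapse $y_\tau y_\tau^{\intercal} + M_t^{\intercal}\Lambda_t M_t - M_\tau^{\intercal}\Lambda_\tau M_\tau$ into $\lambda_\tau$ times a completed square in $y_\tau$). You instead observe that the integral is precisely the law of $y_\tau = A^{\intercal}x_\tau + e_\tau$ given $W$, invoke closure of Gaussians under linear maps and independent sums to get Gaussianity for free, and reduce the problem to two moment computations; the only nontrivial step is the matrix-normal second-moment identity $\mathrm{Cov}(A_{ij},A_{kl}) = (\Lambda_t^{-1})_{ik}(W^{-1})_{jl}$, which you apply correctly (the scalar $x_\tau^{\intercal}\Lambda_t^{-1}x_\tau$ attaches to the column covariance $W^{-1}$, exactly as needed for the $(\lambda_\tau W)^{-1}$ form). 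Your argument is shorter, avoids the two matrix identities entirely, and gets the normalizing constant without computing it. What the paper's calculation buys in exchange is a self-contained density-level verification that does not lean on the meta-theorem about Gaussian stability, and its intermediate quantities $\Lambda_\tau = \Lambda_t + x_\tau x_\tau^{\intercal}$ and $M_\tau = \Lambda_\tau^{-1}(\Lambda_t M_t + x_\tau y_\tau^{\intercal})$ mirror the posterior update rules of Lemma~1, making the structural parallel between prediction and correction explicit. One small presentational gap in your write-up: you should state (in one line) why $\mathbb{E}[A] = M_t$ and the quoted covariance identity hold, e.g.\ via $\mathrm{vec}(A) \sim \mathcal{N}(\mathrm{vec}(M_t), W^{-1} \otimes \Lambda_t^{-1})$, since that vectorized characterization is the cleanest citable justification and also makes the Gaussianity of the linear image $A^{\intercal}x_\tau$ immediate.
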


The proof is in Appendix~\ref{sec:app:marginalA}. 
The marginalization of a multivariate normal distribution over a Wishart distribution for its precision parameter yields a multivariate location-scale T-distribution \cite[ID: D148]{joram_soch_2024_10495684}:
\begin{align}
    \int \mathcal{N}\big(y_{\tau} &\given M_t^{\intercal} x_{\tau}, (\lambda_{\tau} W)^{-1} \big) \mathcal{W}(W \given \Omega_t^{-1}, \nu_t) \mathrm{d}W \nonumber \\
    &= \mathcal{T}(y_{\tau} \given \mu_{\tau}, \Psi_{\tau}^{-1}, \eta_{\tau}) \, ,
\end{align}
where $\mu_{\tau} \triangleq M_t^{\intercal}x_{\tau}$, 
$\eta_{\tau} \triangleq \nu_t - D_y + 1$, and $\Psi_{\tau} \triangleq \eta_{\tau} \Omega_t^{-1} \lambda_{\tau}$.
The posterior predictive distribution provides a recursive uncertainty estimate of the output predictions, which is valuable for decision-making and adaptive control. 

\section{EXPERIMENTS} \label{sec:experiments}
We evaluate the MARX estimator on two systems: a multivariate autoregressive system (verification) and a double mass-spring-damper system (validation)\footnote{Code: \url{https://github.com/biaslab/ECC2025-MARXEFE}}.

\subsection{Baseline estimator}
In both experiments, we compare against a recursive least-squares (RLS) estimator \cite{hannan1989recursive}. The coefficient point estimate $\hat{A}_t$ and (initial) inverse sample covariance matrix $P_0 = I_{D_x}$ are updated at each timestep via
\begin{align}
    P_t &= P_{t\-1} - P_{t\-1} x_t (1 \tp x_t^{\intercal} P_{t\-1} x_t )^{-1} x_t^{\intercal} P_{t\-1}  \\
    \hat{A}_t &= \hat{A}_{t\-1} + P_{t\-1} x_t (1 \tp x_t^{\intercal} P_{t\-1} x_t )^{-1} (y_t \tm \hat{A}_{t\-1}^{\intercal} x_t)^{\intercal}  \, .
\end{align}
This corresponds to a forgetting factor of $1.0$, meaning all data points are weighted equally.
Predictions are given by $y_\tau = \hat{A}^{\intercal}_t x_{\tau}$.

\subsection{Verification} \label{sec:experiments:verification}
The verification system uses $z_t = x_t$ with $N_y = 2$, $N_u = 3$, and $D_y = D_u = 2$.
True coefficients $\tilde{A}$ are generated using Butterworth filters (20\,Hz cutoff) for self-connections and Gaussian noise (mean $0$, std $0.1$) for cross-connections.
Disturbances follow $e_t \sim \mathcal{N}(0, \tilde{W}^{-1})$ with $\tilde{W} = \begin{bmatrix} 300 & 100 \\ 100 & 200 \end{bmatrix}$.
Training sizes $T_{\text{train}} \in \{2, 4, 6, \ldots, 64\}$ are evaluated over $N_{\text{MC}}=600$ Monte Carlo runs and $T_{\text{test}}=100$ test steps.
We compare an uninformative prior (MARX-UI, $\Lambda_0=1e\tm4 \cdot I_{D_x}$, $\Omega_0=1e\tm5 \cdot I_{D_y}$) and weakly informative prior (MARX-WI, $\Lambda_0=1e\tm1 \cdot I_{D_x}$, $\Omega_0=1e\tm1 \cdot I_{D_x}$); both with $M_0=0_{D_x \times D_y}, \nu_0=D_y+2$.

\begin{figure}[htbp]
    \centering
    \includegraphics[width=\linewidth]{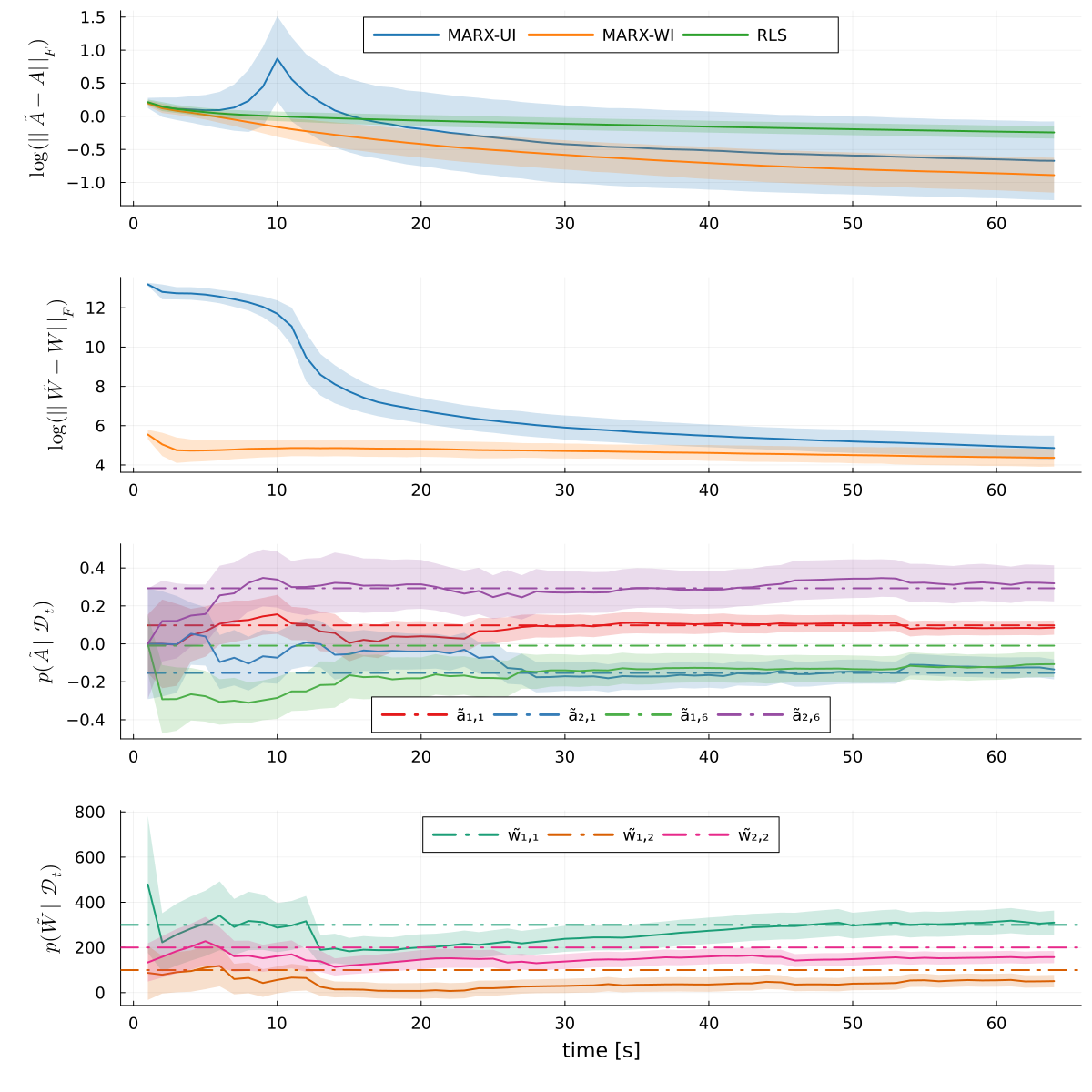} 
    \caption{(\textbf{Top}) Log-scale Frobenius-norm differences ($\tilde{A}$ vs.\ $A$); (\textbf{Second}) $\tilde{W}$ vs $W$. Time series of selected elements of $A$ (\textbf{third}) and $W$ (\textbf{bottom}). Shaded areas denote standard errors; horizontal lines indicate true values.}
    \label{fig:experiments:MARX:param-estim-runs}
\end{figure}

At $T_{train} = 2^6$, Fig.~\ref{fig:experiments:MARX:param-estim-runs} shows that MARX-WI consistently achieves better $A$ estimates than MARX-UI and RLS; MARX-UI outperforms RLS after initial instability.
Unlike RLS, the MARX estimators also estimate $W$.
Since the MARX estimator is probabilistic, it quantifies uncertainty in its estimations of $\tilde{A}$ and $\tilde{W}$ through precision parameters.
The last two subplots in Fig.~\ref{fig:experiments:MARX:param-estim-runs} illustrate the evolution of MARX-WI estimates and their standard deviation ribbons for selected elements of $A$ and $W$, with uncertainty decreasing over time.
All estimators converge to comparable root mean squared errors (RMSE) (with standard errors): $0.284 \pm 5.45e\tm3$ (MARX-WI), $0.289 \pm 5.43e\tm3$ (MARX-UI), and $0.301 \pm 5.50e\tm3$ (RLS).

\subsection{Validation} \label{sec:experiments:validation}
The validation system is a double mass-spring-damper system.
Mass $m_1$ is attached to a fixed base via spring and damper coefficients $k_1$ and $c_1$, and mass $m_2$ is linked to $m_1$ via $k_2$ and $c_2$.
The discrete-time dynamics follow a first-order ordinary differential equation (ODE) $I_t \ddot{z}_t = F(z_t, \dot{z}_t, u_t)$ with diagonal inertia matrix $I_t $ with elements $m_1$ and $m_2$, generalized coordinates $z_t$, first and second time derivatives of $z_t$ ($\dot{z}_t$ and $\ddot{z}_t$), and a generalized force function $F(z_t, \dot{z}_t, u_t) = K z_t + C \dot{z}_t + u_t$ with spring and damping matrices
\begin{align}
K &= \begin{bmatrix}
    \tm (k_1 \tp k_2) &     k_2 \\
                 k_2  & \tm k_2
    \end{bmatrix}, \quad
C = \begin{bmatrix}
    \tm (c_1 \tp c_2) &    c_2 \\
                c_2  & \tm c_2
    \end{bmatrix} \, .
\end{align}
To evolve the system over discrete time steps $\Delta t = 0.05$, we split its equations of motion into a set of two coupled second-order ODEs $z_{\tau} = z_t + \Delta t \dot{z}_t$ and $\dot{z}_{\tau} = \dot{z}_{t} + \Delta t \ddot{z}_t$, solving the system of ODEs with forward Euler.
At $T_{\text{train}}=2^6$, MARX-WI and MARX-UI achieve lower RMSEs and standard error than RLS: $0.048 \pm 1.26e\tm3$ for (MARX-WI), $0.046 \pm 1.26e\tm3$ (MARX-UI), and $0.074 \pm 2.24e\tm3$ (RLS).

\section{DISCUSSION} \label{sec:discussion}

\wouter{todo: paragraph on interpretation of results?}

The modularity of the factor graph approach offers significant practical benefits.
As shown in \cite{loeliger2007factor}, they enable the visual design of sophisticated algorithms by adding, removing or combining known computational blocks. For example, the factor graph in Fig.~\ref{fig:ffg} can be extended to a time-varying MARX by adding state transition factor nodes between the equality nodes over parameters \cite{podusenko2021message}.
In multi-agent robotics, where sensors and actuators are dispersed across platforms, each agent can update its local beliefs via message passing and share only the most informative summaries \cite{damgaard2022study}.
This selective communication can minimize bandwidth while rapidly converging to an accurate global model.
Recent work underscores the value of transmitting informative variational beliefs in multi-agent settings \cite{tedeschini2023message, zhang2024message}, which allows scalable cooperative learning among heterogeneous agents.
Furthermore, distributing computation across nodes opens up promising avenues in federated system identification and multi-robot coordination, especially under privacy or bandwidth constraints \cite{ta2014factor,castaldo2014multi,van2024multi}.
% Additionally, the factor graph naturally facilitates extensions, such as utilizing posterior distributions over parameters to formulate a mutual information-based cost function for adaptive input signal design \cite{chaloner1995bayesian}.
Overall, the factor graph approach fosters robust, scalable, and efficient inference in distributed sensing, control and robotics, as local computations are fused to form a coherent global model.

%Instead of raw measurements, agents can transmit structured messages - posterior summaries or uncertainty-aware updates - enabling scalable and cooperative learning in partially observed environments.
%Moreover, the modularity of factor graphs accomodates heterogeneous agents: each agent may use different sensors, models, or priors, yet still participate in joint inference as long as their contributions correspond to compatible nodes in the shared graph structure.
% The model does not have to be derived from scratch, which highlights the modular nature and utility of the message passing framework \cite{bagaev2023reactive}.

\section{CONCLUSIONS} \label{sec:conclusions}
We presented a recursive Bayesian estimation procedure for multivariate autoregressive models with exogenous inputs formulated as message passing on a Forney-style factor graph. It produces matrix-variate posterior distributions for the coefficients and noise precision, whose uncertainty is propagated into future system output predictions. 

\begin{appendices}
\section{Parameter estimation} \label{sec:app:params}
\wouter{Consider introducing lemma to avoid duplicate steps.}
\wouter{Use short-hand notation to reduce appendices.}
\begin{proof} \label{proof:param-posterior}
% We derive the parameter posterior distribution for time step $t$ via Bayes' rule similar to multivariate Bayesian linear regression \cite[ID: P160]{joram_soch_2024_10495684}, but for the case of batch size $1$ and in a recursive manner starting at $t=1$:
% \begin{equation}
% p(A, W \given \mathcal{D}_t) \propto p(y_t \given A, W, x_t) \; p(A, W \given \mathcal{D}_{t\-1}) \, .
% \end{equation}
The functional form of the likelihood is % \cite[ID: P331]{joram_soch_2024_10495684}
\begin{align}
p(y_t \given A, W, x_t) 
% &= \mathcal{N}\big(y_t \given A^{\intercal} x_t, W^{-1} \big) \\
% &\propto |W|^{1/2} \exp\Big( (y_t - A^{\intercal}x_t)^\intercal W (y_t - A^{\intercal}x_t) \Big)^{-1/2} \\
&\propto \sqrt{|W|} \exp\big(\tm \frac{1}{2}\tr\big[W L_t \big] \big) \, , 
\end{align}
where $L_t \triangleq (y_t - A^{\intercal} x_t)(y_t - A^{\intercal}x_t)^\intercal$.
% The functional forms of the terms in the (joint) prior are % \cite[ID: P323]{joram_soch_2024_10495684}
% \begin{align}
% p(A \given W, \mathcal{D}_{t\-1}) &= \mathcal{MN}(A \given M_{t\-1}, \Lambda_{t\-1}^{-1}, W^{-1}) \\
% &\propto |W|^{D_x/2} \exp\big(\tr\big[ W (A - M_{t\-1})^\intercal \Lambda_{t\-1}(A - M_{t\-1}) \big] \big)^{-1/2} \\
% &\propto |W|^{D_x/2} \exp\big(\tr\big[W H_{t\-1} \big] \big)^{-1/2} \, , \label{eq:param:prior:functional}
% \end{align}
% where $H_{t\-1} \triangleq (A - M_{t\-1})^{\intercal}\Lambda_{t\-1}(A - M_{t\-1})$. 
% is the $A$ parameter exponent term, and
% \begin{align}
% p(W \given \mathcal{D}_{t\-1}) 
% &= \mathcal{W}(W \given \Omega_{t\-1}^{-1}, \nu_{t\-1}) \\
% % &\propto |W|^{(\nu_{t\-1} - D_y - 1)/2} \exp(\tr(\Omega_{t\-1} W))^{-1/2} \\
% &\propto |W|^{(\nu_{t\-1} - D_y - 1)/2} \exp(\tr(W \Omega_{t\-1}))^{-1/2} \, ,
% \end{align}
The prior is
\begin{align}
p(A,\! W | \mathcal{D}_{t\-1}) 
% = \mathcal{MNW}(A, W \given M_{t\-1}, \Omega_{t\-1}^{-1}, \Lambda_{t\-1}^{-1}, \nu_{t\-1}) \\
\! \propto  \! \! \sqrt{\! |W|^{\nu_{t\-1} \tp \bar{D}}} \!\exp(\- \frac{1}{2}\tr[W(H_{t\-1} \+ \Omega_{t\-1})]) ,
\end{align}
where $H_{t\-1} \triangleq (A - M_{t\-1})^{\intercal}\Lambda_{t\-1}(A - M_{t\-1})$ and $\bar{D} \triangleq D_x - D_y - 1$.
The posterior is proportional to the prior times likelihood:
\begin{align}
&p(A, W \given \mathcal{D}_t) \propto p(y_t \given A, W, x_t) \; p(A, W \given \mathcal{D}_{t\-1}) \\
% &\propto |W|^{1/2} \exp\big(\tr \big[ WL_t \big] \big)^{-1/2}  \\
% &\qquad |W|^{(\nu_{t\-1} + D_x - D_y - 1)/2} \exp\big(\tr\big[W(H_{t\-1} \! + \! \Omega_{t\-1})\big] \big)^{-1/2} \nonumber \\
% collecting identical variables
 &\propto \! \sqrt{|W|^{\nu_{t\-1} + 1 + \bar{D}}}  
\exp\big(\tm \frac{1}{2}\tr\big[W(L_t \tp H_{t\-1} \tp \Omega_{t\-1})\big] \big) . \label{eq:posterior:conjugate}
\end{align}
We expand the first terms in the exponent and group them:
\begin{align}
L_t + &H_{t\-1} =  y_ty_t^\intercal - y_tx_t^\intercal A - A^{\intercal} x_t y_t^\intercal + A^{\intercal} x_t x_t^{\intercal} A \\
&\ +  A^{\intercal}\Lambda_{t\-1}A \tm A^{\intercal}\Lambda_{t\-1}M_{t\-1} \tm M_{t\-1}^{\intercal}\Lambda_{t\-1}A \tp M_{t\-1}^{\intercal}\Lambda_{t\-1}M_{t\-1} \nonumber \\
&\ = A^{\intercal}(\Lambda_{t\-1} \! + \! x_t x_t^{\intercal}) A \! - \! A^{\intercal}(x_t y_t^{\intercal} \! + \! \Lambda_{t\-1} M_{t\-1}) \label{eq:proofparam1} \\
&\ - (M_{t\-1}^{\intercal} \Lambda_{t\-1} + y_t x_t^{\intercal})A + y_t y_t^{\intercal} + M^{\intercal}_{t\-1} \Lambda_{t\-1} M_{t\-1}  . \nonumber
\end{align}
Let $\Lambda_t \triangleq \Lambda_{t\-1} + x_t x_t^{\intercal}$, $\xi_t \triangleq x_t y_t^{\intercal}  +  \Lambda_{t\-1} M_{t\-1}$ and $M_t \triangleq \Lambda_t^{-1} \xi_t$. Adding and subtracting $\xi_t^{\intercal} \Lambda_t^{-1} \xi_t$ to \eqref{eq:proofparam1} yields:
\begin{align}
    L_t \! +\! H_{t\-1} &=  A^{\intercal}\Lambda_{t} A \! - \! A^{\intercal} \xi_t - \xi_t^{\intercal} A + \xi_t^{\intercal} \Lambda_t^{-1} \xi_t \\
    &\quad - \xi_t^{\intercal} \Lambda_t^{-1} \xi_t + y_t y_t^{\intercal} + M^{\intercal}_{t\-1} \Lambda_{t\-1} M_{t\-1} \nonumber \\
    &= (A - \Lambda_t^{-1} \xi_t)^{\intercal} \Lambda_{t} (A - \Lambda^{-1}_t  \xi_t) \label{eq:proofparam2} \\
    &\quad - M_t^{\intercal} \Lambda_t M_t + y_t y_t^{\intercal} + M^{\intercal}_{t\-1} \Lambda_{t\-1} M_{t\-1} \, . \nonumber
\end{align}
%
% Now let $M_t = \Lambda_t^{-1} \xi_t$. Eq.~\ref{eq:proofparam2} can be refactored to $\xi_t^\intercal\Lambda_t^{-1} \xi_t$ :
% \begin{equation}
% \xi_t^\intercal\Lambda_t^{-1} \xi_t = \xi_t^\intercal\Lambda_t^{-1}\Lambda_t\Lambda_t^{-1}\xi_t = M_t^{\intercal}\Lambda_t M_t \, .
% \end{equation}
Plugging the above into \eqref{eq:posterior:conjugate}, we recognize the functional form of the matrix normal Wishart distribution:
\begin{align}
&\sqrt{|W|^{\nu_t + \bar{D}}} \exp\big(\tm \frac{1}{2} \tr\big[W( (A \tm M_t)^{\intercal} \Lambda_{t} (A \tm M_t) \! + \! \Omega_t)\big] \big)  \nonumber \\
&\qquad \propto \mathcal{MNW}(A, W \given M_t, \Lambda_t^{-1}, \Omega_t^{-1}, \nu_t)
\end{align}
whose parameters are
\begin{align}
    \nu_t
    &= \nu_{t\-1} + 1 \, , \\
    \Lambda_t
    &= \Lambda_{t\-1} + x_tx_t^\intercal \, , \\
    M_t
    &= (\Lambda_{t\-1} + x_t x_t^{\intercal})^{-1} (\Lambda_{t\-1}M_{t\-1}  + x_t y_t^\intercal)  \, , \text{ and } \\
    \Omega_t 
    &= \Omega_{t\-1} + y_ty_t^\intercal + M_{t\-1}^{\intercal} \Lambda_{t\-1}M_{t\-1} - M_t^\intercal\Lambda_tM_t \, .
\end{align}
This concludes the proof.
\end{proof}

\section{Backwards message from likelihood} \label{sec:app:backward}

\begin{proof}
The MARX likelihood function is:
\begin{align}
    p(y_t \given A,W, x_t)  
    % &\propto |W|^{1/2} \exp\big( -\frac{1}{2}(y_t - A^{\intercal} x_t)^{\intercal} W (y_t - A^{\intercal} x_t) \big) \nonumber \\
    &\propto \sqrt{|W|} \exp\big( \tm \frac{1}{2}\tr\big[ W L_t \big] \big) \label{eq:proof:backward1} \, ,
    % &= |W|^{1/2} \exp \big( \tr \big[ W (y_t y_t^{\intercal} - A^{\intercal} x_t y_t^{\intercal} - y_t x_t^{\intercal} A \nonumber \\
    % &\qquad \qquad + A^{\intercal} x_t x_t^{\intercal} A) \big] \big)^{-1/2} 
\end{align}
where the completed square is
\begin{align}
    L_t &\triangleq (y_t - A^{\intercal} x_t)(y_t - A^{\intercal} x_t)^{\intercal}  \\
    &= y_t y_t^{\intercal} - A^{\intercal} x_t y_t^{\intercal} - y_t x_t^{\intercal} A + A^{\intercal} x_t x_t^{\intercal} A \, .
\end{align}
Let $\bar{\Lambda}_t \triangleq x_t x_t^{\intercal}$, $\bar{\xi}_t \triangleq x_t y_t^{\intercal}$ and $\bar{M}_t = \bar{\Lambda}_t^{-1} \bar{\xi}_t$. Then adding and subtracting $\bar{\xi}_t^{\intercal} \bar{\Lambda}_t \bar{\xi}_t$ allows us to rewrite the square in terms of $A$: 
\begin{align}
    &L_t + \bar{\xi}_t^{\intercal} \bar{\Lambda}_{t}^{-1} \bar{\xi}_t - \bar{\xi}_t^{\intercal} \bar{\Lambda}_{t}^{-1} \bar{\xi}_t  \\
    &\quad = y_t y_t^{\intercal} + (A - \bar{M}_t)^{\intercal} \bar{\Lambda}_t (A - \bar{M}_t) - \bar{\xi}_t^{\intercal} \bar{\Lambda}_{t}^{-1} \bar{\xi}_t \, . \nonumber
\end{align}
% Note that $\bar{\xi}_t^{\intercal} \bar{\Lambda}_{t}^{-1} \bar{\xi}_t = \bar{\xi}_t^{\intercal} \bar{\Lambda}_{t}^{-1} \bar{\Lambda}_{t} \bar{\Lambda}_{t}^{-1} \bar{\xi}_t = \bar{M}_t \bar{\Lambda}_t \bar{M}_t$. 
The two remaining terms cancel:
\begin{align}
    y_t y_t^{\intercal} - \bar{\xi}_t^{\intercal} \bar{\Lambda}_{t}^{-1} \bar{\xi}_t &= y_t y_t^{\intercal} - y_t x_t^{\intercal} (x_t x_t^{\intercal})^{-1} x_t y_t^{\intercal} \\ 
    &= y_t y_t^{\intercal} - y_t I y_t^{\intercal} = 0_{D_y \times D_y} \, .
\end{align}
If we define $\bar{\nu}_t \triangleq 1 - \bar{D}$ for $\bar{D} = D_x + D_y + 1$ and $\bar{\Omega}_t \triangleq 0_{D_y \times D_y}$, then we may recognize the functional form of a matrix normal Wishart in \eqref{eq:proof:backward1};
\begin{align}
    &p(y_t | A,W, x_t) \\
    &\propto \! \sqrt{|W|^{ \bar{\nu}_t + \bar{D}}} \! \exp\big( \! \tm \frac{1}{2} \tr\big[ W ((A \! - \! \bar{M}_t)^{\intercal} \bar{\Lambda}_t (A \! - \! \bar{M}_t) \! + \! \bar{\Omega}_t ) \big] \big) \nonumber \\
    &\propto \mathcal{MNW}(A,W \given \bar{M}_t, \bar{\Lambda}_t^{-1}, \bar{\Omega}_t^{-1}, \bar{\nu}_t) \, .
\end{align}
This concludes the proof.
\end{proof}

\section{Product of matrix normal Wishart distributions} \label{sec:app:product}
\begin{proof}
Let $p_1,p_2$ be two matrix normal Wishart distributions over the same random variables $A,W$:
\begin{align}
    p_1(A,W) &= \mathcal{MNW}(A,W \given M_1, \Lambda_1^{-1}, \Omega_1^{-1}, \nu_1) \\    p_2(A,W) &= \mathcal{MNW}(A,W \given M_2, \Lambda_2^{-1}, \Omega_2^{-1}, \nu_2) \, .
\end{align}
Their product is proportional to:
\begin{align}
    p_1(A,&W) \, p_2(A,W) \nonumber \\
    &\propto \ \sqrt{|W|^{\nu_1 + \bar{D}}}  \exp\big(-\frac{1}{2}\tr\big[W L_1 \big] \big) \\
    &\quad \ \sqrt{|W|^{\nu_2 + \bar{D}}} \exp\big(-\frac{1}{2}\tr\big[W L_2 \big] \big) \nonumber \\
    &= \sqrt{|W|^{\nu_3 + \bar{D}}}  \exp\big(-\frac{1}{2}\tr\big[W (L_1 + L_2) \big] \big)
    % &\propto \ |W|^{(\nu_1 + \bar{D})/2}  \\
    % &\quad \exp\big(\tr\big[W (A \! - \! M_1)^{\intercal} \Lambda_1 (A \! - \! M_1) + \Omega_1 W\big] \big)^{-1/2} \nonumber \\
    % &\quad |W|^{(\nu_2 + \bar{D})/2} \nonumber \\
    % &\quad \exp\big(\tr\big[W (A \! - \! M_2)^{\intercal} \Lambda_2 (A \! - \! M_2) + \Omega_2 W\big] \big)^{-1/2} \nonumber \\
    % &= |W|^{\nu_1 + \nu_2 + D_x - D_y-1} \exp\big(\tr\big[W (A^\intercal \Lambda_1 A - A^{\intercal} \Lambda_1 M_1 \nonumber \\
    % &\ - M_1^{\intercal} \Lambda_1 A + M_1^{\intercal} \Lambda_1 M_1  +  A^\intercal \Lambda_2 A - A^{\intercal} \Lambda_2 M_2 \nonumber \\
    % &\ - M_2^{\intercal} \Lambda_2 A + M_2^{\intercal} \Lambda_2 M_2)  + (\Omega_1 + \Omega_2) W\big] \big)^{-1/2} \\
\end{align}
for $\bar{D} \triangleq D_x - D_y -1$, $\nu_3 \triangleq \nu_1 + \nu_2 + D_x - D_y - 1$ and $L_i \triangleq (A \! - \! M_i)^{\intercal} \Lambda_i (A \! - \! M_i) + \Omega_i$. The sum of $L_i$ is:
\begin{align}
    L_1 + L_2 &= A^\intercal (\Lambda_1 + \Lambda_2) A \label{eq:proof:product1} \\
    &\quad - A^{\intercal} (\Lambda_1 M_1 + \Lambda_2 M_2)  - (M_1^{\intercal} \Lambda_1 + M_2^{\intercal} \Lambda_2) A \nonumber \\
    &\quad + M_1^{\intercal} \Lambda_1 M_1 + M_2^{\intercal} \Lambda_2 M_2  + \Omega_1 + \Omega_2 \nonumber \, .
\end{align}
Let $\Lambda_3 \triangleq \Lambda_1 + \Lambda_2$ and $\xi_3 \triangleq \Lambda_1 M_1 + \Lambda_2 M_2$. Then:
\begin{align}
    &(A - \Lambda_3^{-1} \xi_3)^{\intercal} \Lambda_3 (A - \Lambda_3^{-1} \xi_3) = \nonumber \\
    &\qquad \quad A^{\intercal} \Lambda_3 A - A^{\intercal} \xi_3 -  \xi_3^{\intercal} A + \xi_3^{\intercal} \Lambda_3^{-1} \xi_3 \, .
\end{align}
Using $M_3 \triangleq \Lambda_3^{-1} \xi_3$, \eqref{eq:proof:product1} can be written as:
\begin{align}
    &p_1(A,W) p_2(A,W)  \label{eq:proof:product2} \\
    &\propto \sqrt{|W|^{\nu_3 + \bar{D}}} \exp\big(\tm \frac{1}{2}\tr\big[W \big( (A - M_3)^{\intercal} \Lambda_3 (A - M_3) \nonumber \\
    &\qquad - \xi_3^{\intercal} \Lambda_3^{-1} \xi_3 \! + \! M_1^{\intercal} \Lambda_1 M_1 \! + \! M_2^{\intercal} \Lambda_2 M_2  \! + \! \Omega_1 \! + \! \Omega_2\big) \big] \big) \nonumber
\end{align}
Note that $\xi_3^{\intercal} \Lambda_3^{-1} \xi_3 = \xi_3^{\intercal} \Lambda_3^{-1} \Lambda_3 \Lambda_3^{-1} \xi_3 = M_3^{\intercal} \Lambda_3 M_3$. Let
\begin{align}
    \Omega_3 \triangleq \Omega_1  +  \Omega_2  + \! M_1^{\intercal} \Lambda_1 M_1 \! + \! M_2^{\intercal} \Lambda_2 M_2 - \! M_3^{\intercal} \Lambda_3 M_3 .
\end{align}
Then \eqref{eq:proof:product2} may be recognized as an unnormalized matrix normal Wishart:
\begin{align}
    &\sqrt{|W|^{\nu_3 + \bar{D}}} \exp\big(\tm \frac{1}{2}\tr\big[W \big((A \tm M_3)^{\intercal} \Lambda_3 (A \tm M_3) \! + \! \Omega_3 \big) \big] \big) \nonumber \\
    &\qquad \propto \mathcal{MNW}\big(A,W \given M_3, \Lambda_3^{-1}, \Omega_3^{-1}, \nu_3 \big) \, . \label{eq:proof:product3}
\end{align}
As such, the product of two matrix normal Wishart distributions is proportional to another matrix normal Wishart distribution.
\end{proof}

\section{Marginalization over matrix normal dist.} \label{sec:app:marginalA}
\begin{proof} \label{proof:prediction}
% The posterior predictive distribution is:
% \begin{align}
% & p(y_{\tau} \given u_{\tau}, \mathcal{D}_t) \nonumber \\
% &= \int p(y_{\tau} \given A, W, x_{\tau}) p(A, W \given \mathcal{D}_t) \; d(A, W) \\
% % &= \int \int p(y_* \given A, W, \mathcal{D}_*) p(A \given W) p(W \given \mathcal{D}_t) \; dA \; dW \\
% &= \! \iint \! p(y_{\tau} | A, W, x_{\tau}) p(A | W \! ; \! \mathcal{D}_t) dA  \; p(W | \mathcal{D}_t) dW %\\
% %&= \int p(y_* \given A, W) p(W \given \mathcal{D}_t) \; dW
% \end{align}
%
% \begin{align}
% \int p(y_* \given A, W, \mathcal{D}_*) p(A \given W, \mathcal{D}_t) \; dA
% \end{align}
% Let the likelihood of a the future observation be
% \begin{align}
% p\big(y_{\tau} | &A, W, x_{\tau}\big) = 
% \mathcal{N}(y_{\tau} | A^{\intercal}x_{\tau}, W^{-1}) \! = \! \sqrt{(2\pi)^{-D_y} |W|} \exp\big( \tm \frac{1}{2}\tr \big[ W L_{\tau} \big] \big) , \nonumber
% \end{align}
% where $L_{\tau} \triangleq (y_{\tau} - A^{\intercal} x_{\tau})(y_{\tau} - A^{\intercal} x_{\tau})^\intercal$. The posterior distribution over $A$ is
% \begin{align}
% & p(A \given W ; \mathcal{D}_t)
% = \mathcal{MN}(A \given M_t, \Lambda_t^{-1}, W^{-1}) \\
% &= \sqrt{(2\pi)^{-D_yD_x}|W|^{D_x} |\Lambda_t|^{D_y} }\exp\big(-\frac{1}{2}\tr \big[ W H_t \big] \big) ,
% \end{align}
% where $H_{t} \triangleq (A - M_{t})^{\intercal}\Lambda_{t}(A - M_{t})$.
Let $L_{\smallbullet} \triangleq (y_{\smallbullet} - A^{\intercal} x_{\smallbullet})(y_{\smallbullet} - A^{\intercal} x_{\smallbullet})^\intercal$ and $H_{\smallbullet} \triangleq (A - M_{\smallbullet})^{\intercal}\Lambda_{\smallbullet}(A - M_{\smallbullet})$ where the subscript $_{\smallbullet}$ indicates a time index.
The marginalization over $A$ is:
\begin{align}
&p(y_\tau | u_{\tau}, W; \mathcal{D}_t) \! = \! \int p( y_{\tau} | A, W, x_{\tau}) p(A | W, \mathcal{D}_t) \; \mathrm{d}A \\
 &= \! \int \! \mathcal{N}\big(y_{\tau} | A^{\intercal} x_{\tau}, W^{\-1} \big) \mathcal{MN}\big(A | M_t, \Lambda_t^{\-1}, W^{\-1} \big) \mathrm{d} A  \\
% &= \int \mathcal{N}(y_{ \given Ax_*, W^{-1}) \mathcal{MN}(A \given M_t, W^{-1}, \Lambda_t^{-1}) \; dA \\
&= \sqrt{ (2\pi)^{- D_y (D_{x} + 1)} |W|^{D_x + 1} |\Lambda_t|^{D_y} } \nonumber \\
&\qquad \qquad \int \exp \big(  \tm  \frac{1}{2}\tr\big[ W (L_{\tau}  +  H_t) \big] \big) \mathrm{d}A . \label{eq:proof:marginalA1}
% &= (2\pi)^{-(D_y + D_yD_x)/2} |W|^{(D_x + 1)/2} |\Lambda_t|^{D_y/2} \nonumber \\
% &\quad\cdot\int \exp(\tr(W(L_* + H_t)))^{-1/2} \; dA
\end{align}
Expanding $L_{\tau}$ and $H_t$ and adding them yields:
\begin{align}
    L_{\tau} \tp H_{t} 
    % &= y_{\tau} y_{\tau}^{\intercal} - A^{\intercal} x_{\tau} y_{\tau}^{\intercal} -   y_{\tau} x_{\tau}^{\intercal} A + A^{\intercal} x_{\tau} x_{\tau}^{\intercal} A  \\
    % &\quad + A^{\intercal} \Lambda_{t} A - A^{\intercal} \Lambda_{t} M_{t} - M_t^{\intercal} \Lambda_{t} A + M_{t}^{\intercal} \Lambda_t M_t \nonumber \\
    &=  y_{\tau} y_{\tau}^{\intercal} + M_t^{\intercal} \Lambda_t M_t + A^{\intercal}(\Lambda_t \! + \! x_{\tau} x_{\tau}^\intercal) A  \\
    &\quad - A^{\intercal} (\Lambda_t M_t \! + \! x_{\tau} y_{\tau}^\intercal) - (\Lambda_t M_t \! + \! x_{\tau} y_{\tau}^\intercal)^{\intercal} A .  \nonumber
\end{align}
Let $\Lambda_{\tau} \triangleq \Lambda_t + x_{\tau} x_{\tau}^{\intercal}$, $\xi_{\tau} \triangleq \Lambda_t M_t + x_{\tau} y_{\tau}^{\intercal}$ and $M_{\tau} \triangleq \Lambda_{\tau}^{-1} \xi_{\tau}$. Completing the square gives:
\begin{align}
    % L_{\tau} + H_t &=  (A - M_{\tau})^{\intercal} \Lambda_{\tau} (A - M_{\tau})  \\
    % &\quad - M_{\tau}^{\intercal} \Lambda_{\tau} M_{\tau} + y_{\tau} y_{\tau}^{\intercal} + M_t^{\intercal} \Lambda_t M_t \, . \nonumber
    L_{\tau} + H_t &=  H_\tau - M_{\tau}^{\intercal} \Lambda_{\tau} M_{\tau} + y_{\tau} y_{\tau}^{\intercal} + M_t^{\intercal} \Lambda_t M_t \, . 
\end{align}
Plugging this result into the integral in \eqref{eq:proof:marginalA1} gives:
% \begin{align}
% & \int \exp\big(-\frac{1}{2}\tr\big[W(L_{\tau} + H_t) \big] \big) \; dA \nonumber \\
% % &= \int \exp(\tr(W(H_{\tau} + y_{\tau} y_{\tau}^\intercal + M_t\Lambda_t M_t^\intercal - M_{\tau}^\intercal\Lambda_{\tau} M_{\tau})))^{-1/2} \; dA \\
% &\ = \exp\big(\!-\!\frac{1}{2}\tr\big[W(y_{\tau} y_{\tau}^\intercal \! + \! M_t^\intercal \Lambda_t M_t \! - \! M_{\tau}^\intercal\Lambda_{\tau} M_{\tau}) \big] \big) \nonumber \\
% % &\quad \int \!\exp\big(\! -\! \frac{1}{2}\tr\big[W (A \! -\! M_{\tau})^{\intercal} \Lambda_{\tau} (A \! - \! M_{\tau})\big]\big) dA .
% &\quad \int \!\exp\big(\! -\! \frac{1}{2}\tr\big[W H_\tau \big]\big) dA .
% \end{align}
\begin{align}
& \! \int \! \! \exp\big(\tm \frac{1}{2}\tr\big[W(L_{\tau} \! \tp \! H_t) \big] \big) \mathrm{d}A \! = \! \! \int \! \! \exp\big(\! -\! \frac{1}{2}\tr\big[W H_\tau \big]\big) \mathrm{d}A  \nonumber \\
% &= \int \exp(\tr(W(H_{\tau} + y_{\tau} y_{\tau}^\intercal + M_t\Lambda_t M_t^\intercal - M_{\tau}^\intercal\Lambda_{\tau} M_{\tau})))^{-1/2} \; dA \\
&\quad \cdot \exp\big(\!-\!\frac{1}{2}\tr\big[W(y_{\tau} y_{\tau}^\intercal \! + \! M_t^\intercal \Lambda_t M_t \! - \! M_{\tau}^\intercal\Lambda_{\tau} M_{\tau}) \big] \big)  .
\end{align}
We can recognize the integrand as the functional form of a matrix normal distribution. Thus, the integral evaluates to its inverse normalization factor: 
% of $\mathcal{MN}(A \given M_{\tau}, \Lambda_{\tau}^{-1}, W^{-1})$:
\begin{align}
% & \int \exp\big(-\frac{1}{2}\tr\big[W (A \! -\! M_{\tau})^{\intercal} \Lambda_{\tau} (A \! - \! M_{\tau}) \big] \big) \; dA \nonumber \\
& \int \exp\big(-\frac{1}{2}\tr\big[W H_{\tau} \big] \big) \; dA = \sqrt{\frac{(2\pi)^{D_y D_x}}{|W|^{D_x} |\Lambda_{\tau}|^{D_y}}}  \, .
\end{align}
Using this result, the marginalization over $A$ becomes:
\begin{align}
& p(y_\tau \given  u_{\tau}, W; \mathcal{D}_t)
% &= \sqrt{(2\pi)^{-(D_y + 1)D_x} |W|^{(D_x + 1)} |\Lambda_t|^{D_y}} \nonumber \\
% &\quad \int \exp(\tr(W(L_{\tau} + H_t)))^{-1/2} \; dA \\
% &= \sqrt{\frac{ |W|^{D_x + 1} |\Lambda_t|^{D_y}}{(2\pi)^{D_y(1 + D_x)}}} \sqrt{\frac{(2\pi)^{D_yD_x}}{|W|^{D_x} |\Lambda_{\tau}|^{D_y}}} \\
% &\quad \exp\big(\!-\!\frac{1}{2}\tr\big[W(y_{\tau} y_{\tau}^\intercal \! + \! M_t^\intercal \Lambda_t M_t \! - \! M_{\tau}^\intercal\Lambda_{\tau} M_{\tau}) \big] \big) \nonumber \\
 =  \sqrt{(2\pi)^{-D_y} |\Lambda_t|^{D_y} |\Lambda_{\tau}|^{-D_y} |W|} \\
&\qquad \cdot \exp\big(\! \tm  \frac{1}{2}\tr\big[W(y_{\tau}y_{\tau}^\intercal \tp M_t^\intercal \Lambda_t M_t \tm M_{\tau}^\intercal\Lambda_{\tau}M_{\tau})\big]\big) . \nonumber 
\end{align}
Note that, under the matrix determinant lemma,
\begin{align}
    |\Lambda_{\tau}| &= | \Lambda_{t} + x_{\tau} x_{\tau}^{\intercal} | =  |\Lambda_{t}|(1 + x_{\tau}^{\intercal} \Lambda_t^{-1} x_{\tau}  ) \, ,
\end{align}
which implies that the product of determinants is
\begin{align}
    |\Lambda_t|^{D_y} |\Lambda_{\tau}|^{-D_y} &= \big(1 + x_{\tau}^{\intercal} \Lambda_t^{-1} x_{\tau} \big)^{-D_y} \, .
\end{align}
Let $\lambda_{\tau} \triangleq (1 + x_{\tau}^{\intercal} \Lambda_t^{-1} x_{\tau} )^{-1}$. As $W$ is $D_y$-dimensional, $|W| \lambda_{\tau}^{D_y} = |W \lambda_{\tau}|$.
Furthermore, note that
\begin{align}
    M_{\tau}^{\intercal} \Lambda_{\tau} M_{\tau} &= M_t^{\intercal} \Lambda_{t}(x_{\tau} x_{\tau}^{\intercal} + \Lambda_{t} )^{-1} \Lambda_{t} M_t  \\
    &\quad + y_{\tau}x_{\tau}^{\intercal} (x_{\tau} x_{\tau}^{\intercal} + \Lambda_{t} )^{-1} \Lambda_{t} M_t \nonumber \\
    &\quad  + M_t^{\intercal} \Lambda_{t}(x_{\tau} x_{\tau}^{\intercal} + \Lambda_{t} )^{-1} x_{\tau} y_{\tau}^{\intercal} \nonumber \\
    & \quad + y_{\tau} x_{\tau}^{\intercal}  (x_{\tau} x_{\tau}^{\intercal} + \Lambda_{t} )^{-1} x_{\tau} y_{\tau}^{\intercal} \nonumber .
\end{align}
Combining this with the other terms in the trace gives:
\begin{align}
    y_{\tau}y_{\tau}^\intercal &+ M_t^\intercal\Lambda_tM_t - M_{\tau}^\intercal\Lambda_{\tau}M_{\tau} \\
    &= M_t^\intercal \Lambda_{t} \big(I - (x_{\tau} x_{\tau}^{\intercal} + \Lambda_{t} )^{-1} \Lambda_{t} \big) M_t \nonumber \\
    &\quad - y_{\tau}x_{\tau}^{\intercal} (x_{\tau} x_{\tau}^{\intercal} + \Lambda_{t} )^{-1} \Lambda_{t} M_t \nonumber \\
    & \quad - M_t^\intercal \Lambda_{t}(x_{\tau} x_{\tau}^{\intercal} + \Lambda_{t} )^{-1} x_{\tau} y_{\tau}^{\intercal} \nonumber \\
    &\quad + y_{\tau} \big(1 - x_{\tau}^{\intercal}  (x_{\tau} x_{\tau}^{\intercal} + \Lambda_{t} )^{-1} x_{\tau} \big) y_{\tau}^{\intercal} \, . \nonumber
\end{align}
Applying the Sherman-Morrison formula and re-arranging terms yields the following simplifications:
\begin{align}
    &\big(1 \tm x_{\tau}^{\intercal}  (x_{\tau} x_{\tau}^{\intercal} \tp \Lambda_{t} )^{-1} x_{\tau} \big) = \lambda_{\tau}  \\
    &I \tm (x_{\tau} x_{\tau}^{\intercal} \tp \Lambda_{t} )^{-1} \Lambda_{t} = \lambda_{\tau} \Lambda_t^{-1} x_{\tau} x_{\tau}^{\intercal} \Lambda_t^{-1} \\
    &\Lambda_{t}(x_{\tau} x_{\tau}^{\intercal} \tp \Lambda_{t} )^{-1} x_{\tau} = x_{\tau} \lambda_{\tau} \, .
\end{align}
Using these three simplifications, we have:
\begin{align}
    \tr\Big[W&\big(y_{\tau}y_{\tau}^\intercal + M_t^\intercal \Lambda_t M_t - M_{\tau}^\intercal\Lambda_{\tau}M_{\tau}\big)\Big] \nonumber \\
    %
    % &= \tr( W \big[ \lambda_{\tau} M_t  x_{\tau} (M_t x_{\tau})^{\intercal} - M_t x_{\tau} \lambda_{\tau} y_{\tau}^{\intercal} \nonumber \\
    % &\qquad \qquad - y_{\tau} \lambda_{\tau} x_{\tau}^{\intercal} M_t^{\intercal} + \lambda_{\tau} y_{\tau} y_{\tau}^{\intercal} \big] )\\
    %
    % &= \tr( \lambda_{*} W \big[  y_{*} y_{*}^{\intercal} - M_t x_{*} y_{*}^{\intercal} - y_{*} x_{*}^{\intercal} M_t^{\intercal} + M_t  x_{*} (M_t x_{*})^{\intercal} \big] )\\
    %
    % &\; = y_{\tau}^{\intercal} W \lambda_{\tau} y_{\tau}  -  y_{\tau}^{\intercal} W \lambda_{\tau} M_t^{\intercal} x_{\tau} \\
    % &\qquad - x_{\tau}^{\intercal} M_t W \lambda_{\tau} y_{\tau} +  x_{\tau}^{\intercal} M_t W \lambda_{\tau} M_t^{\intercal}  x_{\tau} \nonumber \\
    &\qquad = \! \big(y_{\tau} \tm M_t^{\intercal} x_{\tau} \big)^{\intercal} W \lambda_{\tau} \big(y_{\tau} \tm M_t^{\intercal} x_{\tau} \big)  . \label{eq:proof:marginalA3}
\end{align}
% \tim{todo: reformulation $\Lambda_t$ and $\Lambda_*$ terms to end up with $\lambda_*^{D_y/2}$ where $\lambda_* \triangleq (1 + x_*^\intercal\Lambda_t^{-1}x_*)^{-1}$. Rearrange trace terms, factor into quadratic term using Sherman-Morrison to end up with $\lambda_*y_*y_*^\intercal$ etc.}
%
Plugging \eqref{eq:proof:marginalA3} into \eqref{eq:proof:marginalA1} yields
\begin{align}
 &p(y_\tau \given  u_{\tau}, W; \mathcal{D}_t)  \\
&= \sqrt{\frac{|W \lambda_{\tau}|}{(2\pi)^{D_y}}} \exp\big(\tm \frac{\lambda_{\tau}}{2} (y_{\tau} \tm M_t^{\intercal} x_{\tau} )^{\intercal} W (y_{\tau} \tm M_t^{\intercal} x_{\tau}) \big) \nonumber \\
&= \mathcal{N}\big(y_{\tau} \given M_{t}^{\intercal} x_{\tau}, (W\lambda_{\tau})^{-1}\big) \, .
% &= p\big(y_{\tau} \given M_t, W, u_{\tau}, \mathcal{D}_t\big) \, .
\end{align}
This concludes the proof.
\end{proof}

\end{appendices}

% \section*{ACKNOWLEDGEMENT}

\bibliographystyle{ieeetr}
\bibliography{references}

\end{document}